\newtheorem{thm}{Theorem}[section]
\newtheorem{prop}[thm]{Proposition}
\newtheorem{lem}[thm]{Lemma}
\theoremstyle{definition}
\newtheorem{example}[thm]{Example}
\newtheorem{rem}[thm]{Remark}
\newcommand{\orcidicon}[1]{\href{https://orcid.org/#1}{\includegraphics[height=2.5ex]{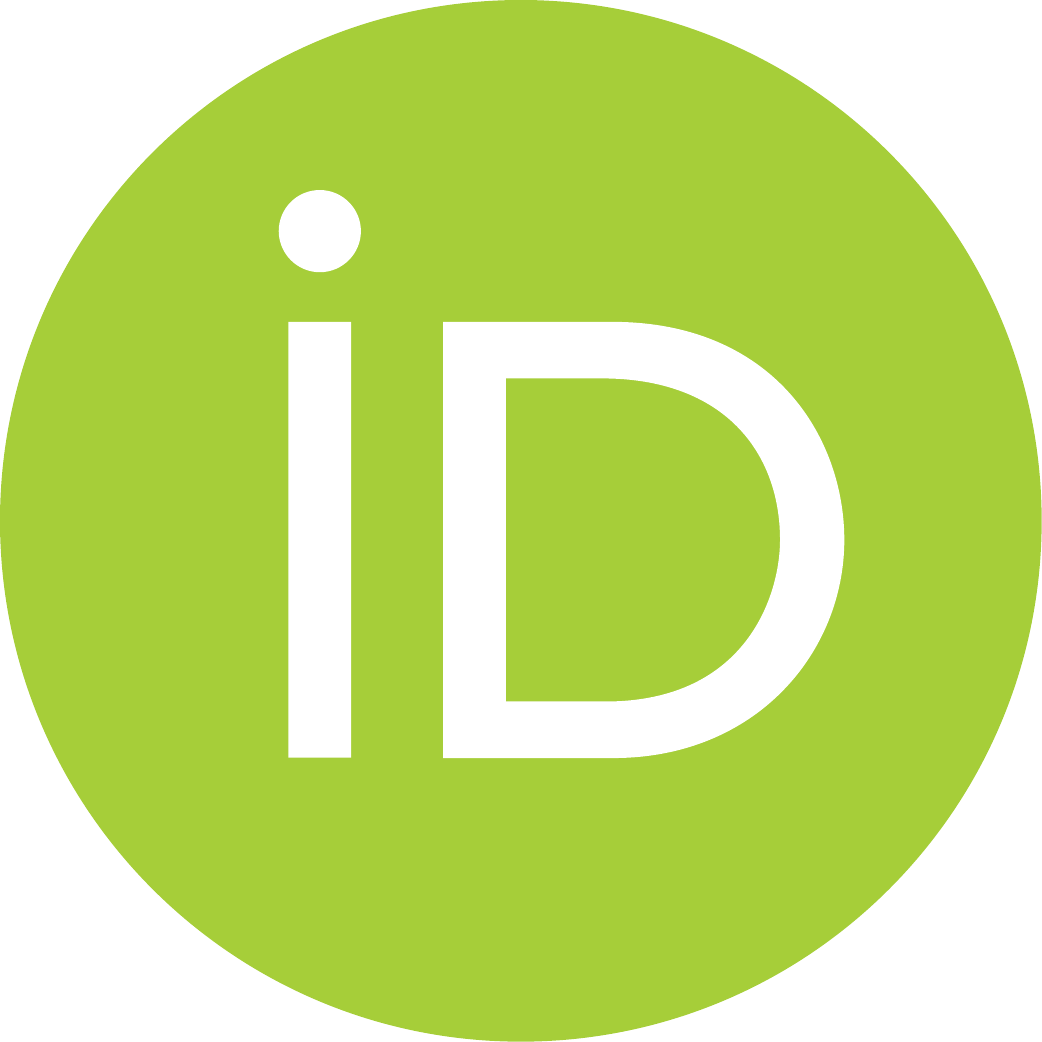}}}
\newcommand{\iu}{\mathrm{i}}
\newcommand{\inv}[1]{ {#1}^{-1} }
\newcommand{\m}[1]{\mathcal{#1}}
\newcommand{\mb}[1]{\mathbb {#1}}
\newcommand{\be}{\begin{equation}} 
\newcommand{\ee}{\end{equation}} 
\newcommand{\ti}[1]{\widetilde {#1}}
\newcommand{\ol}[1]{\overline{#1}}
\newcommand{\set}[1]{\{#1\}}
\newcommand{\din}{\ \dot{\in} \ }
\newcommand{\mg}{\m {C}_{\text{max}}(T)}
\title{Schwinger, ltd: Loop-tree duality in the parametric representation}
\author{Marko Berghoff \protect\orcidicon{0000-0002-9108-3045}}
\email{berghoffmj@gmail.com}
\address{Mathematical Institute, University of Oxford, Oxford, UK (now at: Institut für Mathematik, Humboldt-Universit\"at zu Berlin, Berlin, Germany)}
\begin{document}

\begin{abstract}
We derive a variant of the loop-tree duality for Feynman integrals in the Schwinger parametric representation. This is achieved by decomposing the integration domain into a disjoint union of cells, one for each spanning tree of the graph under consideration. Each of these cells is the total space of a fiber bundle with contractible fibers over a cube. Loop-tree duality emerges then as the result of first decomposing the integration domain, then integrating along the fibers of each fiber bundle. 

\noindent
As a byproduct we obtain a new proof that the moduli space of graphs is homotopy equivalent to its spine. In addition, we outline a potential application to Kontsevich's graph (co-)homology.

\smallskip
\noindent
\textbf{Keywords.} Feynman integrals, parametric representation, Schwinger variables, loop-tree duality, moduli space of graphs, fiber integration.
\end{abstract}

\dedicatory{F\"ur Niels Andersen, anstelle der Sonnenuhr.}

\maketitle

\section{Introduction}

\subsection{Background}
Loop-tree duality expresses the momentum space integral of a Feynman diagram $G$ as a linear combination of (simpler) integrals indexed by the set of spanning trees of the underlying graph. This simplification is achieved by iterated applications of Cauchy's residue theorem. We sketch the idea, in order to motivate the alternative approach presented down below. For the general story see the review article \cite{ltd-review} as well as the references therein. We follow the arguments of \cite{causality-and-ltd}. 

Consider the Feynman integral of a one-loop graph on $n$ edges in a scalar, massive theory,
\be\label{eq:1loopintegral}
I_G=\int d^{D}k \prod_{i=1}^n \frac{1}{L_i^2(k,p)-m_i^2 + i\varepsilon},
\ee
where each $m_i>0$ and each $L_i$ is $\mb Z$-linear in the loop momentum $k\in \mb R^{1,D-1}$ and the external momenta $p_1,\ldots,p_n\in \mb R^{1,D-1}$. The precise form of the $L_i$ depends on a choice of momentum flow, but the integral is independent of this by momentum conservation.

Let us abbreviate the integrand by
\[
f(k,p)= \prod_{i=1}^n \frac{1}{L_i^2(k,p)-m_i^2 + i\varepsilon}.
\]
 We can do the integration over the energy component $k_0$ in \cref{eq:1loopintegral} by using the residue theorem, 
\[
  \int d k^0  f(k,p) =\oint_\Gamma d k^0 f(k,p)  = 2\pi \iu \mathrm{Res}(f,\Gamma),
\]
where $\Gamma$ is a curve that runs first along the real axis from $-\infty$ to $\infty$, then closes along a semi-circle in the upper/lower half-plane through the point $\pm i \infty$ (either choose one or average over both choices). For generic values of $p$ this will collect $n$ different residues, one for each pole $\rho_i=\rho_i(\vec{k},p): k^0=(L_i(k,p))^0\pm \sqrt{ L_i^2(k,p) +m_i^2 -i \varepsilon }$ of the propagators $L_i^2(k,p)-m_i^2 + i\varepsilon$, $i=1,\ldots,n$. See also the derivation in \cite[\S 7]{mb-dk} using divided differences to systematically extract the poles of each propagator.

This transforms $I_G$ into a linear combination of integrals
\be \label{eq:LTDmomentum}
I_G=\int d^Dk \prod_{i=1}^n \frac{1}{L_i^2(k,p)-m_i^2 + \iu\varepsilon} =  \sum_{i=1}^n \pm2\pi \iu\int d\vec{k} \ \mathrm{Res}(f,\Gamma,\rho_i)
\ee
indexed by the spanning trees of $G$ (each pole belongs to a propagator/edge, its complement is a spanning tree of $G$).

It seems natural to generalize this to the multi-loop case,
\be\label{eq:moreloopintegral}
I_G=\int \prod_{i=1}^\ell d^D k_i \prod_{j=1}^n \frac{1}{L_j^2(k,p)-m_j^2 + i\varepsilon},
\ee
using induction on the number of loops $\ell=h_1(G)$. However, the location of each pole depends not only on $k$ and $p$, but also on $\varepsilon$. It is therefore not immediately clear how to compute iterated residues of this form. We refer to \cite{ltd-review} for a thorough discussion and general solution of this problem.

 \subsection{Parametric space and sector decompositions}
 One may wonder whether a version of this loop-tree duality exists for Feynman integrals in Schwinger variables. 

 In this case there seems to be no straightforward analytical method to relate a parametric Feynman integral to a linear combination of integrals indexed by the spanning trees of $G$. 
Of course one could use the Schwinger trick to transform an integral of the form \eqref{eq:LTDmomentum} into a parametric version. However, if one wishes to work exclusively in parametric space, then a more combinatorial approach can be based on Hepp's \textit{sector decomposition} \cite{Hepp66,Speer75}. For a detailed account we refer to \cite{HEINRICH_2008}; see also \cite[\S 4.3]{kreimeryeats-interplay} as well as \cite{Schultka:toric} which discusses general sector decompositions from the viewpoint of toric geometry. This approach leads to formulae of the sought-after form, but there seems to be no general account of loop-tree duality for parametric Feynman integrals in the literature, at least not under this name.

\subsection{An alternative approach}
We propose here a different, more geometric solution. The idea was sketched in \cite{mb-dk}. It is based on a study of the combinatorial geometry of the integration domain. Recall that a parametric Feynman integral arises from integrating a differential form $\omega_G$ (defined in \cref{ssec:parametricFI}) over a simplex in projective space,
\be \label{eq:FIparametric}
I_G=\int_{\Delta_G  } \omega_G \ \text{ where } \Delta_G=\set{ [x_1:\ldots :x_n] \mid x_i > 0  }.
\ee

We will see below that (up to a subset of measure zero) the integration domain $\Delta_G$ decomposes into a disjoint union of cells $D_T$, indexed by the set of spanning trees of $G$, such that each cell is the total space of a fiber bundle $\pi_T$ over a cube $C_T=(0,1)^{|E_T|}$. 

The construction of the map $\pi_T$ can be sketched as follows. The set of all subgraphs of $G$ is partially ordered by inclusion. The \emph{order complex} of this poset\footnote{The order complex of $P$ is the simplicial complex whose vertices are the elements of $P$ and $p_0,\ldots,p_n \in P$ form an $n$-simplex iff $p_0\leq \ldots \leq p_n$.} is isomorphic to the first barycentric subdivision of $\ol \Delta_G = \set{ [x_1:\ldots :x_n] \mid x_i \geq 0  }$. We call the subcomplex formed by the vertices $w_\gamma$, where $\gamma=G/F$ for $F\subset G$ a forest, the \emph{spine} of $\Delta_G$. The other vertices represent the faces of $\ol \Delta_G$ where the set $\set{e\in E_G \mid x_e=0}$ determines a subgraph $\gamma$ with $h_\gamma=\dim H_1(\gamma,\mb Q)>0$. In the context of moduli spaces of graphs these faces are called \emph{faces at infinity} (see \cref{ssec:modspace}). Each spanning tree $T\subset G$ represents a maximal cell $C_T$ in the spine of $\Delta_G$. It is a $t=|E_T|$ dimensional cube, the union of all simplices on vertex sets $\set{w_G,w_{G/T_1},\ldots,w_{G/T_{t-1}},w_{G/T}}$ where $T_i \subsetneq T_{i+1}$ is an ascending filtration of the edge set of $T$. Now define the sector $D_T$ as the union of all open rays between $C_T$ and ``nearby faces at infinity" (this notion is the delicate part of the construction; see \cref{ssec:fibration} for the details, \cref{fig:fiberbundlesunrise} for a first example). 
The map $\pi_T$ is then simply the projection onto $C_T$. 

This defines a smooth fiber bundle that extends to a piecewise smooth fiber bundle on a certain closure of $D_T$ (which includes $C_T$). In fact, it can be extended even further to obtain a fibration $\pi_G$ of $\Delta_G$ over its spine. Furthermore, these maps fit together to form a fibration $\pi$ of the moduli space of graphs over its spine; see \cref{rem:homotopyequivalence}.

With this at hand we are able to rewrite $I_G$ as
\begin{equation*}
   I_G = \sum_{T \subset G} \int_{D_T} \omega_G =  \sum_{T \subset G} \int_{C_T} (\pi_T)_*\omega_G,
\end{equation*}
where $(\pi_T)_*$ denotes the push-forward along $\pi_T$, that is, integration along the fibers of this bundle. 

\begin{figure}[ht]
\begin{tikzpicture}[scale=1]
\coordinate (v0) at (0,0) node {};
  \coordinate  (v1) at (0,2);
   \coordinate  (v2) at (2,2);
   \draw (v1) to[out=90,in=90] (v2);
   \draw[magenta] (v1) -- (v2);
   \draw (v1) to[out=-90,in=-90] (v2);
  \filldraw[fill=black] (v1) circle (0.1);
  \filldraw[fill=black] (v2) circle (0.1);
  \end{tikzpicture}
  %%%%%%%%%%%%%%%%%%%%%%%%%%%%%%%%%%%%%%%%%%%%%%
  \quad  \quad 
  %%%%%%%%%%%%%%%%%%%%%%%%%%%%%%%%%%%%%%%%%%%%%%5
 \begin{tikzpicture}[scale=1]
\coordinate (l) at (-3,0);
  \coordinate  (r) at (3,0); 
   \coordinate  (o) at (0,3);
  \coordinate (lo) at (-1.5,1.5); 
   \coordinate (lr) at (0,0); 
  \coordinate (ro) at (1.5,1.5); 
  \coordinate (c) at (0,1.1); 
  \coordinate (t1) at (-1.2,1.42); 
  \coordinate (t4) at (-1,1.36);
    \coordinate (t2) at (-.8,1.31); 
    \coordinate (t5) at (-.58,1.246);
        \coordinate (t3) at (-.37,1.186);
        \coordinate (t6) at (-.17,1.141);
   \draw (l) -- (r);
   \draw (l) -- (o) ;
   \draw (r) -- (o) ;
   \draw[blue] (t1) -- (o);
   \draw[blue] (t1) -- (l);
    \draw[blue] (t2) -- (o);
   \draw[blue] (t2) -- (l);
   \draw[blue] (t3) -- (o);
   \draw[blue] (t3) -- (l);
    \draw[blue] (t4) -- (o);
   \draw[blue] (t4) -- (l);
   \draw[blue] (t5) -- (o);
   \draw[blue] (t5) -- (l);
   \draw[blue] (t6) -- (o);
   \draw[blue] (t6) -- (l);
     \draw[line width=0.5mm, red] (lo) -- (c);
   \draw[line width=0.5mm, red] (lr) -- (c);
   \draw[line width=0.5mm, red] (ro) -- (c);
 %  \draw (v4) -- (v5) node[pos=0.5,below]{$\sigma_1$};
 %  \draw (v5) -- (v6) node[pos=0.5,below]{$\sigma_1 \sigma_2 \sigma_1$};
 %  \draw (v6) -- (v1) node[pos=0.5,left]{$\sigma_2$};
 %      \draw (v1) -- (v4) node[pos=0.3]{$\sigma_1 \sigma_2 \sigma_1$};
 %      \draw (v2) -- (v5) node[pos=0.7]{$\sigma_2$};
 %      \draw (v3) -- (v6) node[pos=0.23]{$\sigma_1$};
  \filldraw[fill=red] (lo) circle (0.07);
  \filldraw[fill=red] (lr) circle (0.07);
 \filldraw[fill=red] (ro) circle (0.07);
  \filldraw[fill=red] (c) circle (0.07);
 % \filldraw[fill=gray] (v5) circle (0.07);
 % \filldraw[fill=gray] (v6) circle (0.07); 
 \filldraw[fill=black] (l) circle (0.07) node[below]{$[1:0:0]$};
 \filldraw[fill=black] (r) circle (0.07) node[below]{$[0:1:0]$};
 \filldraw[fill=black] (o) circle (0.07) node[above]{$[0:0:1]$};
  \end{tikzpicture} 
   \caption{The sunrise graph $G$ and the simplex $\Delta_G \subset \mb P^G=\mb P(\mb C^3)$. The red part is the spine of $\Delta_G$, the simplicial/cubical complex on the central vertex $w_G$ and the three vertices $w_{G/e_i}$. The blue lines indicate the fibers over the cube $C_{T}$ for $T=e_2$, the black vertices, corresponding to graphs $G/\set{e_i,e_j}$, are the faces at infinity.}\label{fig:fiberbundlesunrise}
\end{figure}
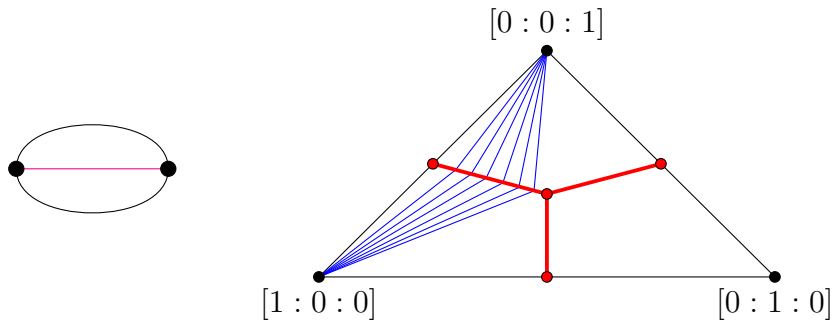

Here it is important to note that both steps are only well-defined if $I_G$ converges absolutely. Thus, to be precise, we should either start with a renormalized integrand or work with some regularization of $\omega_G$ and later worry about the analytic continuation to the points of physical interest. We discuss this in detail in \cref{ssec:parametricFI}.

In summary, we can express a parametric Feynman integral as a linear combination of (simpler) integrals of the push-forwards of $\omega_G$ over the cubes $C_T$. This turns \cref{eq:FIparametric} into
\[
I_G=\sum_{T \subset G} \int_{C_T} \omega_{G,T} \ \text{ with } \ \omega_{G,T}= ({\pi_T})_* \omega_G,
\]
the sum running over all spanning trees of $G$.

This parametric loop-tree duality is arguably much ``cleaner" than its momentum space counterpart. Note, however, that the technical details and intricacies have not simply disappeared; they are hidden in the geometry of the decomposition of $\Delta_G$ and the structure of the fiber bundles $\pi_T$. We demonstrate this by a couple of examples throughout the paper, in particular we use the sunrise diagram (\cref{fig:fiberbundlesunrise}) as a running example. 

When comparing the two variants, it is also important to note that in momentum space we really reduce the number of integration steps; the residue theorem takes care of all $k^0$-integrations. On the other hand, in the parametric version the fiber integration is a non-trivial task, as exemplified by our discussion of the 'wheel with three spokes' in \cref{ssec:examples}.

Potential applications of this construction depend thus on a better understanding of the forms $\omega_{G,T}$, that is, how the combinatorics of the decomposition of $\Delta_G$ and the structure of the graph polynomials $\psi_G,\phi_G$ interact. For instance, in the case of \textit{Feynman periods}, where $\omega_G$ depends only on the Kirchhoff/second Symanzik polynomial $\psi_G$, contraction-deletion or Dodgson identities \cite{fb:periods} may lead to further simplifications. This line of thought also applies to the \textit{canonical forms} on Kontsevich's graph complex defined by Francis Brown in \cite{brown21,brown22}. More comments and a discussion of further applications can be found in \cref{sec:outlook}.

\subsection{Organization of the paper}

We start \cref{sec:geometry} by introducing some notation and conventions. Then we discuss decompositions of the graph simplex $\Delta_G$ and how this gives rise to the fibrations $\pi_T$. An important aspect is understanding the structure of the fibers which takes up most of this section. After this is established, we discuss an example in detail. The section finishes with some comments on the naturality of the construction and other possible fibrations. 

In \cref{sec:fiberintegrals} we quickly recall the definition of and elementary facts about fiber integration.

Section \ref{sec:FI} applies the content of the previous two sections to Feynman integrals in the parametric representation. First, we review Feynman integrals in Schwinger parameters, then we derive the parametric variant of loop-tree duality. In the end we discuss a couple of examples. 

The paper finishes with an outlook in \cref{sec:outlook}.

\section*{Acknowledgements}
This project has received funding by the Royal Society through grant {URF{\textbackslash}R1{\textbackslash}201473} and by the European Research Council (ERC) under the European Union’s Horizon 2020 research and innovation program (grant agreement No.\ 724638). 

The ideas presented here originated from numerous discussions with Karen Vogtmann and Dirk Kreimer about and around Outer space. I also benefited greatly from conversations with Francis Brown and Ralph Kaufmann on the geometry of the spine and its embedding into the moduli space of graphs. Furthermore, I owe many thanks to Erik Panzer for giving me a crash course on his \texttt{Maple} program \texttt{HyperInt} \cite{Panzer_hyperint}.

\section{Simplex fibrations}\label{sec:geometry}

\subsection{Notation and basic definitions}\label{ssec:notation} The following conventions will be used throughout the paper.

\begin{itemize}
    \item We write $x=(x_1,\ldots,x_d)$ for a vector $x \in k^d$ with $k\in \set{\mb R, \mb C}$ and $[x]=[x_1:\ldots:x_d]$ for the corresponding point in $\mb P(k^d)$ -- we sometimes omit the brackets $[ \cdot ]$ if the meaning is clear from the context. We set $\mb R_+=(0,\infty)$.
    \item For a graph $G$ we let $E_G$ and $V_G$ denote its set of (internal) edges and vertices, respectively. We set $e_G=|E_G|$, $v_G=|V_G|$, and we write $h_G=\dim H_1(G;\mb Q)$ for the \textit{rank} or \textit{loop number} of $G$.
    \item All graphs we consider here will be finite and connected. 
    \item A graph is \textit{core} or \textit{1-particle irreducible} (1PI) if removing any edge reduces its first Betti number by one.
    \item A graph may have external edges (\emph{legs})\footnote{They play no role for the geometric arguments of the present section, but become important in the definition of Feynman integrals in \cref{sec:FI}.}. These are defined in the usual way using half-edges: Internal edges are formed by pairs of half-edges, external edges correspond to single non-paired half-edges. A \emph{tadpole} (self-loop) is an edge formed by two half-edges that are connected to the same vertex.
    \item A \emph{tree} is a graph $T$ such that $h_T=0$. A \emph{forest} is a disjoint union of trees.
    \item A \emph{subgraph} $\gamma \subset G$ (without legs) is a graph $\gamma$ such that $V_\gamma\subset V_G$ and $E_\gamma \subset E_G$. If $\gamma\subset G$ is a subgraph, we write $G-\gamma$ for the subgraph of $G$ defined by $V_{G-\gamma}=V_G$ and $E_{G-\gamma}=E_G \setminus E_\gamma$. We write $G/\gamma$ for the graph obtained from $G$ by collapsing each connected component of $\gamma$ to a vertex.
    \item A \emph{spanning tree} of $G$ is a subgraph $T \subset G$ such that $T$ is a tree and $V_T=V_G$. A \emph{spanning forest} of $G$ is a subgraph $F \subset G$ such that $F$ is a forest and $V_F=V_G$.
    \item We sometimes abuse notation by identifying edge sets and subgraphs in the obvious way.
    \item If no confusion is possible, we write $x$ for a singleton $\set x$.
    \item The symbols $ \dot{\in} $ and $ \dot{\subseteq} $ mean ``\ldots pairwise different elements / pairwise disjoint subsets of \ldots ". 
\end{itemize}

\subsection{The graph simplex and its spine}\label{ssec:graphsimplex}

Let $G$ be a graph without tadpoles (the case with tadpoles is discussed in \cref{rem:tadpoles}). Set $\mb P^G=\mb P(\mb C^{e_G})$ with coordinates $x_e$ for $e\in E_G$. Define the (open) \emph{graph simplex} $\Delta_G$ as
\[
\Delta_G= \{ [x] \mid x_e \in \mb R_+ \} \subset \mb P^G.
\]

For every spanning forest $F\subset G$ we define a subset $C_F$ of $\Delta_G$ by  
\begin{equation}\label{eq:cube}
   C_F=\set{ (x_e)_{e\in E_G} \mid  0< x_i < x_j  \text{ for all } i\in E_F, j\in E_{G-F}
\ \wedge \
 x_i=x_j \text{ for all } i,j \in E_{G-F}  } . 
\end{equation}
It can be parametrized by an open cube of dimension $e_F$: To ease notation, suppose that the edges of $G$ are ordered such that the edges of $F$ are labeled by $1,\ldots, f$. Then $C_F$ is parametrized by the map 
\begin{equation}\label{eq:cubeparam}
  \iota_F \colon (0,1)^{f} \longrightarrow \Delta_G, \ (x_1,\ldots,x_f) \longmapsto  \ [x_1 : \ldots : x_f: 1: \ldots:1].  
\end{equation}
See \cref{fig:fiberbundlesunrise,fig:4banana,fig:doublebubble,fig:dunce,fig:fiberbundletadpole} for examples.

The family of sets $\set{C_F \mid F \text{ spanning forest of } G}$ is (the geometric realization of) a cubical complex called the \textit{spine} of $\Delta_G$. More precisely, it is the spine of the corresponding cell in the moduli space of graphs that have rank equal to $h_G$ \cite{cv,hv}.

\subsection{An excursion into the moduli space of graphs and its spine}\label{ssec:modspace}
We sketch the geometric background that motivated the construction of the fibration presented in this paper. This section is not essential and may be skipped on a first read. 

We follow the exposition in \cite{cgp,cgp2} which studies the moduli space of \textit{tropical curves} which are certain weighted, marked metric graphs. A slight adjustment produces the moduli spaces of graphs (as a subset of the latter) which is classically defined as the quotient of Outer space by the action of $\mathrm{Out}(F_n)$ \cite{cv}.\footnote{It is actually much easier to define the moduli space of graphs via the detour through either tropical curves or Outer space. Interested readers are therefore encouraged to consult the given references.} We discuss here the case without legs/external edges, but it easily generalizes to the case where such special (half-)edges are allowed. 

Let $2\leq g\in \mb N$. Define a category $\Gamma_g$ by
\begin{itemize}
    \item $\mathrm{ob}(\Gamma_g)$ is the set of all graphs of rank $g$ that have no bridges/separating edges (1PI) and all vertices at least three-valent.
    \item $\mathrm{hom}(\Gamma_g)$ are given by forest collapses or isomorphisms of graphs. 
\end{itemize}

For each $G \in \Gamma_g$ let 
\[
\sigma(G) = \mb R_{\geq 0}^{E_G}= \set{ \ell \colon E(G) \to \mb R_{\geq 0} \mid \inv{\ell}(0) \text{ is a forest in $G$}        }
\]
denote the space of metrics on $G$ that are allowed to vanish on forests in $G$. Given a morphism $f \in \mathrm{hom}(\Gamma_g)$, define a map $\sigma(f) \colon \sigma\left(f(G)\right) \to \sigma(G) $ by
\[
\ell' \longmapsto \ell \ \text{ where }  \ \ell(e)= \begin{cases} \ell'(e') & \text{ if } e=f(e'), \\
                             0 & \text{ if $f$ collapses $e$.} \end{cases}
\]
This defines a functor from $\Gamma_g$ to the category of topological spaces. The \emph{moduli space of rank $g$ graphs} ${\m {MG}_g}$ is defined as the colimit of this functor. 

If there were no graph isomorphisms in $\mathrm{hom}(\Gamma_g)$, the space $\m {MG}_g$ would be a union of cones, with some of their faces deleted, identified along their common boundaries. Isomorphisms act however non-trivially on this space, so that $\m {MG}_g$ is an orbifold. For instance, if $G$ is the sunrise graph (\cref{fig:fiberbundlesunrise}), then the corresponding cone in $\m {MG}_2$ is the quotient of $\mb R_{\geq 0}^3$ by the action of $S_3$ permuting the coordinates $x_1,x_2,x_3$. 

It is often convenient to normalize the metrics on graphs. For this define a function $\lambda \colon \m {MG}_g \to \mb R_{\geq 0}$ by measuring the \emph{volume} of a metric, $\lambda(\ell)= \sum_{e\in E_G} \ell(e)$. Consider the subset $MG_g \subset \m {MG}_g$ defined by $MG_g =  \inv{ \lambda } (1)$. Note that this turns the cones $\sigma(G)$ into simplices. In the following we will work exclusively with this space and henceforth refer to $MG_g$ as the \emph{moduli space of (rank $g$) graphs}. 

If we ignore isomorphisms for a moment, we can think of $MG_g$ as a semi-simplicial complex (aka $\Delta$-complex \cite{hatcher}) with some of its faces deleted. If $\sigma_G$ is a cell in $MG_g$, parametrizing the space of metrics of unit volume on $G$, then 
\[
\ell \in \partial \sigma_G \ \Longleftrightarrow \ \inv{\ell}(0) \text{ is a non-empty forest in $G$.} 
\]
If on the other hand a metric $\ell$ vanishes on a subgraph $\gamma$ with $h_\gamma>0$, then this face is not in $MG_g$; it is called a \emph{face at infinity}. 

One may compactify $MG_g$ by various methods of which two are important for us. Firstly, we can add all faces at infinity to form a (semi)simplicial completion $\overline{MG}_g$ of $MG_g$ \cite{cgp}. Secondly, we can truncate small ``neigborhoods of infinity" in each cell $\sigma_G$ to obtain polytopes $J_G$ that assemble to a compact space $\ti{MG}_g$, homotopy equivalent to $MG_g$. This is equivalent (homeomorphic) to the iterated blow-up of $\ol{MG}_g$ along the faces at infinity. For the details we refer to \cite{v-bord}; see also \cite{brown21,mb2}.  

In the context of Outer space it is well-known \cite{cv} that $MG_g$ deformation retracts onto its \emph{spine} $S_g$, a subset which has the structure of a cubical complex \cite{hv}. As an abstract complex it is defined as the order complex of the poset $\left( \mathrm{ob}(\Gamma_g), \leq \right)$ where $G \leq G'$ if there is a forest $F \subset G'$ with $G=G'/F$\footnote{Here we are tacitly working with isomorphism classes of graphs}. The geometric realization of this complex can be embedded as a subcomplex $S_g$ of the barycentric subdivision $\beta(\overline{MG}_g)$ of the semi-simplicial completion of $MG_g$. The deformation retract $r: MG_g \to S_g$ is then described by collapsing all the cells that have vertices at infinity, that is, in $\overline{MG}_g \setminus MG_g$.

Above (\cref{ssec:graphsimplex}) we have seen how to embed $S_g$ into $MG_g$. In \cref{rem:homotopyequivalence} we sketch how the maps $\pi_T$ can be assembled to define a fibration $\pi:MG_g \to S_g$, producing another proof that both spaces are homotopy equivalent.

\subsection{A fibration over the spine} \label{ssec:fibration}

Let $T$ be a spanning tree of $G$, that is, a maximal forest $T\subset G$. We want to define a subset $D_T$ of $\Delta_G$ (or $\sigma_G$, as defined in \cref{ssec:modspace}) such that there is a surjective map $\pi_T \colon D_T \to C_T$. Moreover, we want the union of the sets $D_T$ where $T$ ranges over all spanning trees of $G$ to form a partition of $\Delta_G$ (up to a set of measure zero).

Putting the cart before the horse, we start by describing the fibers of this sought-after map. In fact, the fibers are the very thing we are interested in because our eventual goal is to integrate forms along the fibers of this map. In any case, we need to establish that there \emph{is} a map with the asserted properties. 

As mentioned in the introduction, the idea is to define $\pi_T$ by bundling together all fibers that connect $C_T$ to certain subsets of the \emph{faces at infinity} in $\ol \Delta_G$, that is, to the faces where $\set{e\in E_G \mid x_e=0}$ defines a subgraph $\gamma$ with $h_\gamma>0$. 
These subsets of faces will be indexed by certain core subgraphs of $G$. To make this precise consider 
\[
\m C (T) = \set{ \gamma \subsetneq G \mid \gamma \text{ is core,} \ E_\gamma \cap E_T \neq \varnothing, \ E_\gamma \cup E_T \neq E_G  },
\]
 the set of all proper core subgraphs of $G$ that share at least one edge with $T$ and whose union with $T$ does not cover the whole graph. 
This set is partially ordered by the inclusion relation. We let $\mg$ denote its maximal elements.

\begin{lem} \label{lem:cmax}
Let $G$ be a connected graph. The cardinality of $\mg$ is equal to $h_G$, the rank of $G$. In particular, $|\mg|$ is independent of the choice of spanning tree $T$ of $G$.
\end{lem}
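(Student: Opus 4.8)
The plan is to set up an explicit bijection between $\mg$ and the set of chords $E_G \setminus E_T$ of the spanning tree $T$. Since $|E_G \setminus E_T| = e_G - (v_G - 1) = h_G$ for \emph{every} spanning tree, such a bijection settles both claims at once: the cardinality is $h_G$, and it visibly does not depend on $T$. The candidate correspondence sends a chord $f \in E_G \setminus E_T$ to the subgraph $M_f := \mathrm{core}(G - f)$, the largest core subgraph of $G$ omitting $f$ (recall that core, i.e.\ $1$PI, means every edge lies on a cycle within the subgraph); conversely it should send a maximal element of $\m{C}(T)$ to the unique chord that it omits.

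First I would record the elementary fact that core subgraphs are closed under union: if every edge of $\gamma_1$ and of $\gamma_2$ lies on a cycle within the respective subgraph, then the same cycle witnesses this in $\gamma_1 \cup \gamma_2$. Consequently, for a fixed edge $f$, the union of all core subgraphs omitting $f$ is again core and still omits $f$, so there is a unique maximal such subgraph, namely $M_f = \mathrm{core}(G-f)$. The heart of the matter is then a fundamental-cycle computation. For each chord $g$ let $\Lambda_g$ denote its fundamental cycle with respect to $T$, that is, $g$ together with the unique path in $T$ joining its endpoints. Since the edges of $\Lambda_g$ other than $g$ are tree edges, $\Lambda_g$ contains no chord besides $g$; in particular $f \notin \Lambda_g$ whenever $g \neq f$. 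Hence for $g \neq f$ the cycle $\Lambda_g$ lies in $G - f$ and therefore in $M_f$, which shows that $M_f$ contains every chord except $f$ and, provided a second chord exists (i.e.\ $h_G \geq 2$), at least one tree edge, because $G$ has no tadpoles and so the tree part of $\Lambda_g$ is nonempty. Thus each $M_f$ lies in $\m{C}(T)$: it is core, proper, meets $E_T$, and omits $f$ so that $E_{M_f} \cup E_T \neq E_G$. Moreover the $M_f$ are pairwise incomparable, because $M_f$ contains $f'$ while $M_{f'}$ does not for $f \neq f'$; in particular they are distinct.

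It remains to see that $\mg = \{\, M_f : f \in E_G \setminus E_T \,\}$. Given any $\gamma \in \m{C}(T)$, the condition $E_\gamma \cup E_T \neq E_G$ forces $\gamma$ to omit some chord $f$, and being core it then sits inside $\mathrm{core}(G-f) = M_f$. So every element of $\m{C}(T)$ lies below one of the $M_f$; combined with the incomparability just established, this shows that the maximal elements are exactly the $M_f$, all $h_G$ of them distinct. I expect the main obstacle to be precisely the key step that a maximal element omits exactly one chord --- equivalently, that $M_f$ contains all chords but $f$ --- since this is what upgrades the easy surjection ``maximal element $\mapsto$ an omitted chord'' to a genuine bijection and rules out undercounting; the fundamental-cycle observation is what makes it work. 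Finally, the degenerate rank-one case $h_G = 1$, where there is a single chord and $M_f$ may collapse, should be inspected directly.
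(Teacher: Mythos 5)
Your proof is correct and takes essentially the same approach as the paper: the paper collapses $T$ to a rose on $h_G$ petals and identifies $\mg$ with the $(h_G-1)$-element subsets of petals, and its inverse map (``expand $T$ again and chop off edges from $I\cup T$ to make it core'') is precisely your $M_f=\mathrm{core}(G-f)$ indexed by the complementary chord $f$; you merely supply the details the paper leaves implicit (closure of core subgraphs under union, the fundamental-cycle argument that $M_f$ contains every other chord and a tree edge, and the incomparability/maximality check). Your caveat about $h_G=1$ is genuine — there any nonempty core subgraph contains the unique chord, so $\m C(T)=\varnothing$ and the lemma as literally stated fails — but the paper's proof shares this blind spot, and the paper handles the one-loop case separately anyway.
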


\begin{proof}
A core subgraph $\gamma$ of $G$ is maximal if and only if $h_\gamma=h_G-1$.
A simple model for $\mg$ is obtained by collapsing all edges in $T$. Then $G/T$ is a rose on $h_G$ petals ($h_G$ tadpoles/self-loops connected to a single vertex). There is a bijection 
\[
\mg \cong \set{I\subset E_{G/T} \colon |I|=h_{G}-1}.
\]
The injection $\hookleftarrow$ is given by expanding $T$ again and chopping off edges from the subgraph $I\cup T$ to make it core. For the other direction $\hookrightarrow$ recall that every $\gamma$ in $\mg$ shares at least one edge with $T$ and its union with $T$ is not the full graph $G$. Thus, after collapsing $T$ the subgraph $\gamma$ becomes a rose with at most $h_G-1$ petals. If the number of petals is less than $h_G-1$, then $\gamma $ was not maximal.
\end{proof}

We now describe the fibers of the map $\pi_T$. Given a point $x \in C_T$ we let the fiber of $\pi_T$ over $x$ be the set
\begin{equation}\label{eq:cone}
 \inv{\pi_T}(x)=\bigcup_{\gamma_1, \ldots, \gamma_{h_G-1} \din \mg} \text{Cone}_x\left(v_{\gamma_1}(x), \ldots, {v_\gamma}_{h_G-1}(x)\right)   
\end{equation}
where $v_\gamma$ is the map that sends all $x_e$ with $e\in E_\gamma$ to zero,
\[
v_\gamma \colon C_T \longrightarrow \ol \Delta_{G/\gamma} \subset \ol \Delta_G , \quad x_e \longmapsto 
\begin{cases}
x_e & \text{ if } e \notin E_\gamma, \\
0 & \text{ else, }
\end{cases}
\]
and $\text{Cone}_x\left(y_1, \ldots, y_n \right)$ denotes the open convex cone in $\Delta_G$ that is based at $x$ and spanned by the $y_i$, 
\[
\text{Cone}_x\left(y_1, \ldots, y_n \right)=\big\{ [  x + \mu_1 y_1 + \ldots + \mu_n y_n ] \mid  \mu_i \in \mb R_+  \big\}
.
\]

\begin{example}
Consider the icecream-cone graph, depicted in \cref{fig:dunce}, with spanning tree $T=\set{e_2,e_4}$. Then $\mg= \set{\gamma_{124},\gamma_{34}}$ where $\gamma_I=\set{e_i \mid i\in I }$. The cube $C_T$ is parametrized as in \cref{eq:cubeparam},
\[
C_T=\set{ [1:x_2:1:x_4] \mid x_2,x_4 \in (0,1) },
\]
and the maps $v_{\gamma_{124}}$ and $v_{\gamma_{34}}$ are given by
\[
v_{\gamma_{124}}\colon [1:x_2:1:x_4] \longmapsto [0:0:1:0],
\quad
v_{\gamma_{34}} \colon [1:x_2:1:x_4]\longmapsto [1:x_2:0:0].
\]
We see that $v_{\gamma_{124}}$ maps the whole cube into the vertex at infinity, while $v_{\gamma_{34}}$ is non-constant; its image covers half of the line $\{x_3=x_4=0\}$ at infinity.
\end{example}

\begin{figure}[ht]
  \begin{tikzpicture}[scale=.8]
  \coordinate (h) at (0,0) node {};
   \coordinate (v0) at (0,2.5);
   \coordinate  (v1) at (2,1.5);
   \coordinate (v2) at (2,3.5);
   \draw[magenta] (v0) -- (v1) node [midway,below]{$\color{black}e_2$};
   \draw (v0) -- (v2) node [midway,above]{$e_1$};
   \draw (v2) to[out=-135,in=135] (v1) node [xshift=-.55cm,yshift=.75cm]{$e_3$};
   \draw[magenta] (v2) to[out=-45,in=45] (v1) node [xshift=.55cm,yshift=.75cm]{$\color{black}e_4$}; 
   \fill[black] (v0) circle (.1cm);
   \fill[black] (v1) circle (.1cm); 
   \fill[black] (v2) circle (.1cm); 
  \end{tikzpicture}
  %%%%%%%%%%%%%%%%%%%%%%%%%%%%%%%%%%%%%%%%%%%%%%%%%%%
  \quad \quad 
  %%%%%%%%%%%%%%%%%%%%%%%%%%%%%%%%%%%%%%%%%%%%%%%%%%%%%%
\begin{tikzpicture}[scale=2.3]
\coordinate (v1) at (-1,0);
\coordinate  (v2) at (1,-.2); 
\coordinate  (v4) at (0.15,1.44); 
\coordinate  (v3) at (1.23,.5);
 \coordinate (c2) at (.44,.1); 
\coordinate (c3) at (0.13,.7); 
  \coordinate (c4) at (0.18,0.26);   
 \coordinate (c1) at (.4,0.5);
  \coordinate (h34) at (1.23,.5);
  \coordinate (h12) at (-.05,-.095);
 % FIBER 
 \fill[fill=cyan!50] (v1) -- (c1) -- (c2) -- (h12) -- (v1);
 \fill[fill=cyan!50] (c3) -- (h34) -- (v3) -- (c4) -- (c3);
  \fill[fill=cyan!50] (v1) -- (c3) -- (c4) -- (v1);
  \fill[fill=cyan!50] (h12) -- (c1) -- (c2) -- (h12);
    \fill[fill=cyan!50] (c1) -- (c3) -- (h34) -- (c1);
    \fill[fill=cyan!50] (c4) -- (c2) -- (v3) -- (c4);
  \draw[blue] (c3) to (v1);
  \draw[blue, dashed] (c4) to (v1);
  \draw[blue] (c2) to (h12);
  \draw[blue] (c3) to (h34);
  \draw[blue, dashed] (c4) to (v3);
  \draw[blue] (c2) to (v3);
 % SIMPLEX
 \filldraw[fill=black] (v1) circle (0.03) node[left]{$1$};
  \filldraw[fill=black] (v2) circle (0.03) node[right]{$2$};
   \filldraw[fill=black] (v4) circle (0.03) node[above]{$4$};
    \filldraw[fill=black] (v3) circle (0.03) node[right]{$3$};
  \draw (v1) to (v2);
  \draw (v1) to (v4);
  \draw (v2) to (v4);
  \draw (v4) to (v3);
%  \draw[dashed] (v1) to (v3);
    \draw (v2) to (v3);
         % CUBE 
         \fill[fill=red] (c1) -- (c2) -- (c4) -- (c3) -- (c1);
    \draw (c1) to (c2);
    \draw[dashed] (c2) to (c4);
  \draw[dashed] (c4) to (c3);
  \draw (c3) to (c1);
      \filldraw[fill=blue] (h12) circle (0.023);
   \draw[blue] (c1) to (h34);
   \filldraw[fill=red] (c2) circle (0.023);
   \filldraw[fill=red] (c3) circle (0.023);
     \filldraw[fill=red] (c4) circle (0.023); 
     \draw[blue] (c1) to (h12); 
     \filldraw[fill=orange] (c1) circle (0.033);
      \draw (c1) to (c2);
  \end{tikzpicture}
\caption{Cube $C_T$ (red) and sector $D_T$ (blue) for $T=\set{e_2,e_4}$ in the icecream-cone graph. The orange vertex marks the center of $\Delta_G$. Here $\mg$ consists of the two subgraphs with edge sets $\set{e_1,e_2,e_4}$ and $\set{e_3,e_4}$.}\label{fig:dunce}
\end{figure}
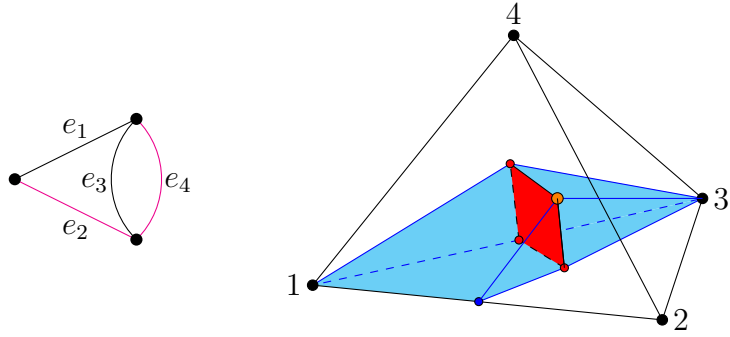

See \cref{fig:fiberbundlesunrise,fig:4banana,fig:doublebubble,fig:fiberbundletadpole} for further examples.
\newline

In order to define the map $\pi_T$ we need the following

\begin{lem}\label{lem:welldefined}
For $x\neq x' \in C_T$ we have $\inv{\pi_T}(x) \cap \inv{\pi_T}(x') = \varnothing.$
\end{lem}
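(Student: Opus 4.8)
The plan is to prove disjointness by the strongest possible means: I will show that the base point $x$ can be recovered from \emph{any} single point $p \in \inv{\pi_T}(x)$. Since such a $p$ is, by \cref{eq:cone}, of the explicit form $p=[x+\sum_i \mu_i v_{\gamma_i}(x)]$ with $\mu_i \in \mb R_+$, exhibiting a map that reconstructs $x$ from $p$ immediately yields the claim: if $p\in\inv{\pi_T}(x)\cap\inv{\pi_T}(x')$, then both $x$ and $x'$ equal the reconstructed point, so $x=x'$.

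First I would fix the concrete model of $\mg$ supplied by the proof of \cref{lem:cmax}. Collapsing $T$ turns $G$ into a rose whose petals are the $h_G$ non-tree edges $f_1,\ldots,f_{h_G}$, and the maximal core subgraphs are indexed by which petal is omitted: writing $\gamma^{(k)}$ for the element of $\mg$ built from the petal set $\set{f_1,\ldots,f_{h_G}}\setminus\set{f_k}$, every non-tree edge other than $f_k$ lies in a fundamental cycle inside $\gamma^{(k)}$ and hence survives the passage to the core. The single combinatorial input I need is therefore the incidence
\[
 f_j \notin E_{\gamma^{(k)}} \iff j=k.
\]
Because $\mg$ has exactly $h_G$ elements, choosing $h_G-1$ pairwise distinct ones amounts to deleting one index $m$, so $\inv{\pi_T}(x)$ is a union of $h_G$ open cones, one per omitted $\gamma^{(m)}$. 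A short computation gives the $e$-th coordinate of $x+\sum_{k\neq m}\mu_k v_{\gamma^{(k)}}(x)$ as $x_e\big(1+\sum_{k\neq m,\,e\notin E_{\gamma^{(k)}}}\mu_k\big)$.

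Next I would read off the omitted index $m$ from $p$ alone. Specialising this coordinate formula to the non-tree edges and using the displayed incidence, the $f_j$-coordinate equals $x_{f_j}(1+\mu_j)$ for $j\neq m$ and $x_{f_m}$ for $j=m$. The decisive point is that $x\in C_T$ forces all non-tree coordinates of $x$ to be equal, by the defining condition $x_i=x_j$ for $i,j\in E_{G-T}$ in \cref{eq:cube}; hence among the $h_G$ non-tree coordinates of $p$ the value at $f_m$ is the unique strict minimum, every other being scaled by a factor $1+\mu_j>1$. Thus $m$, and then each $\mu_j=p_{f_j}/p_{f_m}-1$, is determined by $p$; feeding these back into the coordinate formula on the tree edges recovers every remaining $x_e$ by a single division. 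So $x$ is a function of $p$, as required. The case $h_G=1$ is immediate, since then $\inv{\pi_T}(x)=\set{x}$.

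The step I expect to be the genuine obstacle is making the combinatorial identification airtight — in particular confirming that the core operation in \cref{lem:cmax} never discards a non-tree edge, so that $f_j\notin E_{\gamma^{(k)}}\iff j=k$ holds exactly, and verifying that $\gamma^{(k)}$ does depend on a \emph{spanning} tree so that $G/T$ really is a rose. Once this incidence and the constancy of the non-tree coordinates on $C_T$ are secured, the ``unique minimal non-tree coordinate'' criterion pins down the cone and the reconstruction of $x$ is routine.
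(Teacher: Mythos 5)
Your proof is correct, and it rests on the same combinatorial engine as the paper's — both arguments come down to \cref{lem:cmax}, which you use in the form of the incidence $f_j \notin E_{\gamma^{(k)}} \Leftrightarrow j=k$, and which the paper uses as a two-part claim about subsets $A \subset \mg$ of cardinality $h_G-1$ (a common non-tree edge of all of $A$, plus a distinguishing non-tree edge for each member). The packaging, however, is genuinely different. The paper argues by contradiction: it first reduces to a single cone type by remarking that the cones are open, then writes the projective identity $x+\sum_i \mu_i v_{\gamma_i}(x)=\lambda\bigl(x'+\sum_i\mu_i'v_{\gamma_i}(x')\bigr)$ and solves it coordinate by coordinate, forcing $\lambda=1$, $\mu_i=\mu_i'$ and finally $x=x'$. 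You instead build an explicit left inverse of $\pi_T$: the omitted index $m$ is read off any fiber point $p$ as the unique strict minimum among its non-tree coordinates (using that these coordinates of $x$ coincide on $C_T$ by \cref{eq:cube} and that the cone parameters are strictly positive), after which the $\mu_j$ and the tree coordinates are recovered from scale-invariant ratios. This buys two things: the projective ambiguity $\lambda$ never appears, and — more importantly — intersections of cones indexed by \emph{different} $(h_G-1)$-element subsets of $\mg$ are excluded for free, since the minimum criterion determines the omitted index from $p$ alone; the paper dispatches this reduction only with a rather terse openness remark. The paper's version, in turn, stays closer to the definition \cref{eq:defnpi} and needs no discussion of representatives. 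Your two flagged worries are both unproblematic: $G/T$ is a rose precisely because $T$ is a spanning tree, and the core operation in \cref{lem:cmax} never discards a non-tree edge $f_j$ with $j\neq k$, because the fundamental cycle of $f_j$ with respect to $T$ is a cycle contained in $I\cup T$, and an edge lying on a cycle is never a bridge at any stage of the core-taking process.
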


\begin{proof}
First of all note that $\text{Cone}_x\left(v_{\gamma_1}(x), \ldots, {v_\gamma}_{h_G-1}(x)\right)$ is an open cone, it does not include its boundary faces, spanned by the rays from $x$ to some of the $v_\gamma(x)$. It also does not meet any face at infinity.
Hence, it suffices to consider a fixed cone $C_x= \text{Cone}_x\left(v_{\gamma_1}(x), \ldots, {v_\gamma}_{h_G-1}(x)\right)$ and vary $x$.

So let $x\neq x' \in C_T$ and suppose first that $h_G=2$. This means $C_x= \text{Cone}_x\left(v_{\gamma}(x)\right)$ for some $\gamma$ in $\mg$. If $v_\gamma(x)= v_\gamma(x')$, then $C_x \cap C_{x'}=\varnothing$. If not, then using the parametrization by \cref{eq:cubeparam} we can find a non-empty subset $I\subset E_T\setminus  E_\gamma$ such that 
\[
x_i\neq x_i'  \ \Longleftrightarrow \ i\in I. 
\]
Points in the intersection of $C_x$ and $C_{x'}$ satisfy
\[
x + \mu v_\gamma(x)  = \lambda (  x' + \mu' v_\gamma(x') )  \quad (\lambda >0).
\]
Explicitly we get
\begin{enumerate}
\item $x_e= \lambda x_e'$ for $e\in E_\gamma\cap E_T$,
\item $1 = \lambda 1$ for $e\in E_\gamma \setminus E_T$,
\item $x_e + \mu x_e = \lambda ( x_e' + \mu' x_e')$ for $e\in E_T \setminus E_\gamma$,
     and
    \item $1+ \mu 1= \lambda (1+\mu'1)$ for $e\in E_G \setminus (E_\gamma \cup E_T)$.
\end{enumerate}
Since $I$ is disjoint from $E_\gamma$, the first item reads $x_e= \lambda x_e$. Thus, $\lambda=1$. With (4) we find then $\mu=\mu'$. Finally, from (2) and (3) we infer $x_e = x_e'$ for all $e\in E_G$, a contradiction.  

The general case $h_G > 2$ follows by the same reasoning as above, using the following claim: For any subset $A \subset \mg$ with $|A|=h_G-1$
\begin{enumerate}
    \item there is an edge $e$ that does not belong to $T$, but to every element of $A$.
    \item for any $\gamma \in A$ there is an $e(\gamma)$ in the complement of $T$ and $\gamma$ which belongs to any other $\gamma' \in A$.
\end{enumerate}  

This is a corollary of \cref{lem:cmax}. We can think of an element $\gamma \in \mg$ as a subset $I\subset \set{1,\ldots, h_G}$ of cardinality $h_G-1$, or equivalently, as a one-element subset of the latter. That is, we can specify an element $\gamma \in \mg$ uniquely by an edge in the complement of $T$ that does not belong to $\gamma$, but to any other $\gamma' \in \mg$. By the same argument, $A$ is determined by choosing $h_G-1$ ``forbidden edges" out of $\set{1,\ldots,h_G}$. Thus, there is always one left which has to belong to every $\gamma \in A$.

With this we can solve the system of equations for $C_x \cap C_{x'}$. (1) leads to $\lambda=\lambda'$ and (2) leads to $\mu_i=\mu_i'$ for every $i=1,\ldots, h_{G-1}$. Hence, $x=x'$, a contradiction.
\end{proof}

\begin{figure}[ht]
\begin{tikzpicture}[scale=1]
  \coordinate  (v0) at (0,0) node {}; 
    \coordinate  (v1) at (0,2); 
   \coordinate  (v2) at (1.3,2);
   \coordinate (v3) at (2.6,2); 
  % \coordinate (p2) at (2.5,0); 
  % \draw (p1) to (v1);
 %  \draw (p2) to (v2);
   \draw (v1) to [out=90,in=90] node[above] {$e_1$} (v2);
   \draw (v1) to [out=-90,in=-90] node[below] {$e_2$} (v2);
   \draw (v2) to [out=90,in=90] node[above] {$e_3$} (v3);
   \draw (v2) to [out=-90,in=-90] node[below] {$e_4$} (v3);
  \filldraw[fill=black] (v1) circle (0.1);
  \filldraw[fill=black] (v2) circle (0.1);
    \filldraw[fill=black] (v3) circle (0.1);
  \end{tikzpicture}
  %%%%%%%%%%%%%%%%%%%%%%%%%%%%%%%%%%%%%%%%%%%%%%%%%%%%%%%
  \ 
  %%%%%%%%%%%%%%%%%%%%%%%%%%%%%%%%%%%%%%%%%%%%%%%%%%%%%%
\begin{tikzpicture}[scale=2.2]
\coordinate (v1) at (-1,0);
\coordinate  (v2) at (1,-.2); 
\coordinate  (v4) at (0.15,1.44); 
\coordinate  (v3) at (1.23,.5);
 \coordinate (c1) at (-0.4,0.75);
 \coordinate (c2) at (-.01,.5); 
\coordinate (c3) at (0.13,.75); 
  \coordinate (c4) at (.4,0.6);  
  \coordinate (h34) at (.65,1);
  \coordinate (h12) at (-.1,-.09);
  % CUBE
 \draw (c1) to (c2);
  \draw (c2) to (c4);
  \draw (c4) to (c3);
  \draw (c3) to (c1);
    % \fill[fill=red] (c1) -- (c2) -- (c4) -- (c3) -- (c1);
  %%%%%%%%%%%%%%%%%%%%%%%%%%%%%%%%%%%%%%%
  % \draw (p1) to (v1);
 %  \draw (p2) to (v2);
 % FIBER 
 \fill[fill=cyan!50] (v1) -- (h12) -- (c3) -- (v1);
  \fill[fill=cyan!50] (c1) -- (v4) -- (h34) -- (c3) -- (c1);
  \fill[fill=cyan!50] (v1) -- (c1) -- (c3) -- (v1);
   \fill[fill=cyan!50] (h12) -- (c2) -- (c4) -- (h12);
    \fill[fill=cyan!50] (c1) -- (c2) -- (v4) -- (c1);
  \fill[fill=cyan!50] (c4) -- (c3) -- (h34) -- (c4);
  \draw[blue] (c1) to (v4);
      \draw[blue] (c4) to (h34);
       \draw[blue] (c3) to (h34);
   \draw[blue] (c4) to (h12);
    \draw[blue] (c2) to (h12);
     \draw[blue] (c3) to (v1);
      \draw[blue] (c1) to (v1);
 % SIMPLEX
 \filldraw[fill=black] (v1) circle (0.03) node[left]{$1$};
  \filldraw[fill=black] (v2) circle (0.03) node[right]{$2$};
   \filldraw[fill=black] (v4) circle (0.03) node[above]{$4$};
    \filldraw[fill=black] (v3) circle (0.03) node[right]{$3$};
  \draw (v1) to (v2);
  \draw (v1) to (v4);
  \draw (v2) to (v4);
  \draw (v4) to (v3);
  \draw[dashed] (v1) to (v3);
    \draw (v2) to (v3);
         % CUBE 
         \fill[fill=red] (c1) -- (c2) -- (c4) -- (c3) -- (c1);
     \draw[dashed] (c1) to (c2);
  \draw (c2) to (c4);
  \draw (c4) to (c3);
  \draw[dashed] (c3) to (c1);
   \draw[blue] (c2) to (v4);
   \filldraw[fill=red] (c1) circle (0.023);
   \filldraw[fill=red] (c2) circle (0.023);
   \filldraw[fill=red] (c3) circle (0.023);
     \filldraw[fill=orange] (c4) circle (0.033);
      \filldraw[fill=blue] (h12) circle (0.023);
   \filldraw[fill=blue] (h34) circle (0.023);
  \end{tikzpicture}
 %%%%%%%%%%%%%%%%%%%%%%%%%%%%%%%%%%%%%%%%%%%%%%%%%%%%%%%%%%%%%%%%%%%%%%%%%
   \quad 
   %%%%%%%%%%%%%%%%%%%%%%%%%%%%%%%%%%%%%%%%%%%%%%%%%%%%%%%%%%%%%%%%%%%%%%5
\begin{tikzpicture}[scale=2.2]
\coordinate (v1) at (-1,0);
\coordinate  (v2) at (1,-.2); 
\coordinate  (v4) at (0.15,1.44); 
\coordinate  (v3) at (1.23,.5);
 \coordinate (c2) at (.44,.1); 
\coordinate (c3) at (0.13,.7); 
  \coordinate (c4) at (0.18,0.26);   
 \coordinate (c1) at (.4,0.6);
  \coordinate (h34) at (.65,1);
  \coordinate (h12) at (-.09,-.09);
  %%%%%%%%%%%%%%%%%%%%%%%%%%%%%%%%%%%%%%%
  % \draw (p1) to (v1);
 %  \draw (p2) to (v2);
 % FIBER 
 \fill[fill=cyan!50] (v1) -- (c1) -- (c2) -- (h12) -- (v1);
 \fill[fill=cyan!50] (c3) -- (h34) -- (v3) -- (c4) -- (c3);
  \fill[fill=cyan!50] (v1) -- (c3) -- (c4) -- (v1);
  \fill[fill=cyan!50] (h12) -- (c1) -- (c2) -- (h12);
    \fill[fill=cyan!50] (c1) -- (c3) -- (h34) -- (c1);
    \fill[fill=cyan!50] (c4) -- (c2) -- (v3) -- (c4);
  \draw[blue] (c3) to (v1);
  \draw[blue, dashed] (c4) to (v1);
  \draw[blue] (c2) to (h12);
  \draw[blue] (c3) to (h34);
  \draw[blue, dashed] (c4) to (v3);
  \draw[blue] (c2) to (v3);
 % SIMPLEX
 \filldraw[fill=black] (v1) circle (0.03) node[left]{$1$};
  \filldraw[fill=black] (v2) circle (0.03) node[right]{$2$};
   \filldraw[fill=black] (v4) circle (0.03) node[above]{$4$};
    \filldraw[fill=black] (v3) circle (0.03) node[right]{$3$};
  \draw (v1) to (v2);
  \draw (v1) to (v4);
  \draw (v2) to (v4);
  \draw (v4) to (v3);
%  \draw[dashed] (v1) to (v3);
    \draw (v2) to (v3);
         % CUBE 
         \fill[fill=red] (c1) -- (c2) -- (c4) -- (c3) -- (c1);
     \draw (c1) to (c2);
  \draw[dashed] (c2) to (c4);
  \draw[dashed] (c4) to (c3);
  \draw (c3) to (c1); 
  \draw[blue] (c1) to (h34);
     \filldraw[fill=red] (c2) circle (0.023);
   \filldraw[fill=red] (c3) circle (0.023);
     \filldraw[fill=red] (c4) circle (0.023);
      \filldraw[fill=blue] (h12) circle (0.023);
   \filldraw[fill=blue] (h34) circle (0.023);
   \draw[blue] (c1) to (h12);
   \filldraw[fill=orange] (c1) circle (0.033); 
  \end{tikzpicture}
\caption{Cubes $C_T$ (red) and sectors $D_T$ (blue) for $T=\set{e_2,e_3}$ and $\set{e_2,e_4}$. The orange vertex marks the center of $\Delta_G$. The two blue points are the vertices of $\beta(\ol \Delta_G)$ that are represented by the two core subgraphs on edges $e_3,e_4$ and $e_1,e_2$, respectively.}\label{fig:doublebubble}
\end{figure}
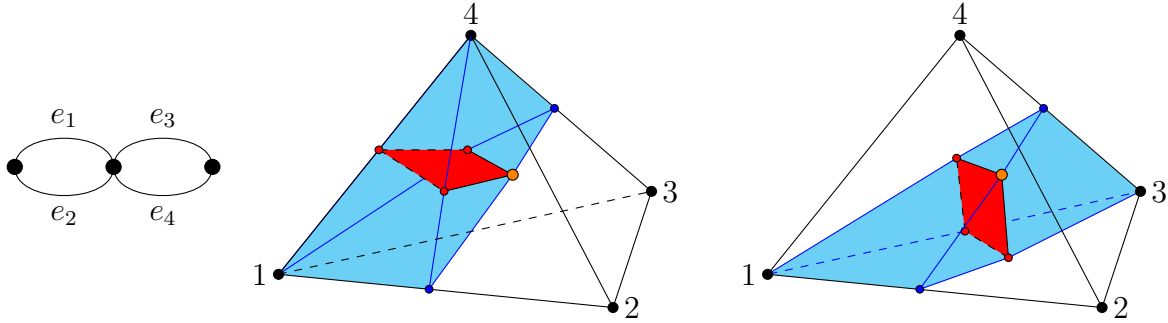

The previous lemma allows to finally define the map $\pi_T$. We set $D_T= \bigcup_{x\in C_T} \inv{\pi_T}(x)$ and define $\pi_T$ by 
\be \label{eq:defnpi}
\pi_T(y)=x  \  \Leftrightarrow \  y \in  \text{Cone}_x\left(v_{\gamma_1}(x), \ldots, {v_\gamma}_{h_G-1}(x)\right)  \text{ for } \gamma_1,\ldots,\gamma_{h_G-1} \din \mg  .
\ee
This is well-defined by \cref{lem:welldefined}.
\newline

The closure of $D_T$ in $\ol \Delta_G$ has a convenient description in terms of $\beta(\ol \Delta_G)$, the first barycentric subdivision of $\ol \Delta_G$\footnote{This point of view is very much inspired by \cite{berger-kaufmann:WBK} which discusses the passage from simplices to cubes in a much more general setting.}. The vertices of the latter complex are in one-to-one correspondence with proper subgraphs $\gamma \subsetneq G$. We can parametrize them by
\[
w_\gamma= [ x_e ] \text{ with }  x_e=\begin{cases}0 & \text{ if } e \in E_{\gamma} \\ 1 & \text{ else.}
\end{cases}
\]
Then $D_T$ is the interior of the subcomplex of $\beta(\ol \Delta_G)$ spanned by the vertex set 
\[
\set{w_\gamma \mid  E_\gamma \cap E_T \neq \varnothing, \ E_\gamma \cup E_T \neq E_G  } \cup \set{w_F \mid F\subset T}.  
\]
By definition $\ol D_T$ contains all rays from points in $C_T$ (including the vertices of type $w_F$) to the vertices of type $w_\gamma$ with $\gamma \in \mg$. Furthermore, if $\gamma_1\cap \gamma_2 \neq \varnothing$, then the closure of $\mathrm{Cone}_x(v_{\gamma_1}(x),v_{\gamma_2}(x))$ contains the vertex $w_{\gamma_1 \cap \gamma_2}$:
\be\label{eq:intersectionvertex}
w_{\gamma_1 \cap \gamma_2} = \lim_{\mu \to \infty}  [  x + \mu v_{\gamma_1}(x) + \mu v_{\gamma_2}(x) ] \  \text{ for any }  x    \text{ with }  x_e=1  \text{ if }  e \in E_{\gamma_1 \cap \gamma_2}.
  \ee 
  Additionally, $\ol D_T$ contains all the vertices $w_{\gamma \cup F}$ for $F\subset T$ and $\gamma \in \mg$:
\be \label{eq:noncorevertex}
w_{\gamma \cup F} = v_\gamma(x) \text{ for } x_e= \begin{cases}0 & \text{ if } e \in E_{F} \\ 1 & \text{ else.}\end{cases}
\ee 
Both \cref{eq:intersectionvertex,eq:noncorevertex} have solutions $x$ in (the closure of) $C_T$.

\begin{prop}\label{prop:covering}
Up to a set of measure zero (codimension greater than one) the sets $D_T$ partition the simplex $\Delta_G$,
 \[
 \Delta_G \subset \bigcup_{T \in \m T(G)} \overline D_T   \quad \text{ and }  \ \bigcap _{T \in \m T(G)} {D_T} = \varnothing.
 \]
\end{prop}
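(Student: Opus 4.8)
The plan is to match the cone decomposition \cref{eq:cone} with the classical Hepp sector decomposition of $\ol\Delta_G$ and to show that both are organised by the same combinatorial datum, namely the minimum spanning tree of $G$ read off from the coordinate values; this yields the covering and the disjointness simultaneously. First I would dispose of the non-generic locus: the set of $[x]\in\Delta_G$ at which two coordinates coincide is a finite union of hyperplane sections, hence of codimension one, and is exactly the measure-zero set excluded in the statement. On its complement I record for each point the linear order $x_{a_1}<\dots<x_{a_{e_G}}$ of its coordinates. The closed cells on which this order is constant are precisely the top-dimensional simplices of the barycentric subdivision $\beta(\ol\Delta_G)$ (a simplex of $\beta(\ol\Delta_G)$ is a flag of subgraphs, and a maximal flag is the same datum as a linear order of the edges); these cover $\ol\Delta_G$ and their interiors are pairwise disjoint. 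It therefore suffices to assign to every order a spanning tree $T$ and to prove that the union of the Hepp sectors assigned to $T$ coincides, up to measure zero, with $\ol D_T$.

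To an order I associate the tree $T$ produced by the greedy algorithm that scans the edges from the smallest coordinate upwards and keeps an edge whenever it does not close a cycle with the edges already kept; this is the unique minimum spanning tree of $G$, and each non-tree edge is the largest edge of the unique cycle it closes with $T$. Existence and uniqueness are immediate, and they already force the second assertion: on the generic locus every point lies in a single cell, so the sectors are pairwise disjoint there. In particular $\bigcap_T D_T=\varnothing$, since a common point would have to be a coordinate-tie point lying in the closure of every maximal cell, whereas the open cones of \cref{eq:cone}, being based in the \emph{open} cube $C_T$ and moving strictly away from it, exclude such walls (for instance the fully degenerate point $[1:\dots:1]$).

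The substantive step is to prove that the open cones based along $C_T$ sweep out exactly the union of those Hepp sectors whose minimum spanning tree is $T$. I would do this by inverting the parametrisation \cref{eq:cone}. Writing $x = x_0+\sum_i\mu_i\,v_{\gamma_i}(x_0)$ with $x_0\in C_T$, $\gamma_1,\dots,\gamma_{h_G-1}\din\mg$, and $\mu_i>0$, and using that $v_{\gamma_i}$ zeroes the coordinates of $\gamma_i$, one reads off
\[
  x_e = x_{0,e}\Big(1+\sum_{i:\,e\notin \gamma_i}\mu_i\Big).
\]
Here \cref{lem:cmax} is the organising principle: after collapsing $T$ the graph $G/T$ is a rose whose petals are the non-tree edges, and the $h_G$ elements of $\mg$ are indexed by the single petal each one omits. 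Consequently a non-tree edge lies outside exactly one $\gamma_i$, so the formula inflates precisely the non-tree coordinates; the smallest non-tree coordinate singles out the omitted element of $\mg$ (the one not used in the given cone), the remaining non-tree coordinates fix the $\mu_i$, and the tree coordinates then descend to a well-defined point $x_0$ of $C_T$ parametrised as in \cref{eq:cube,eq:cubeparam}. The combinatorial fact that this linear system has a unique positive solution landing in $C_T$ exactly when $T$ is the minimum spanning tree is governed by the intersection pattern of the maximal core subgraphs extracted in the proof of \cref{lem:welldefined} (every $(h_G-1)$-subset of $\mg$ has a distinguished common non-tree edge, and each of its members has a private one).

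The main obstacle is precisely this last identification: making rigorous that the subdivision of a single fibre into cones over the different $(h_G-1)$-subsets of $\mg$ matches the refinement of an order that leaves its minimum spanning tree unchanged, with no gaps and no overlap between distinct trees, and controlling the boundary strata where coordinate ties place a point in the closures of several subcones. This is where the matroid/greedy structure of spanning trees has to be used in full; one must in particular distinguish the ``inert'' ties (across which the minimum spanning tree does not change, and which a single $D_T$ legitimately contains) from the cell walls (which the open cones avoid). Once this is in place the covering is automatic: every point of the generic locus has a minimum spanning tree $T$ and hence lies in $D_T$, and passing to closures gives $\Delta_G\subset\bigcup_{T}\ol D_T$.
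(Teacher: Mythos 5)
Your strategy hinges on one identification: that each open sector $D_T$ coincides, up to measure zero, with the union of the Hepp sectors whose Kruskal (minimum) spanning tree is $T$. You state this, set up the inversion $x_e=x_{0,e}\bigl(1+\sum_{i:e\notin\gamma_i}\mu_i\bigr)$, and then explicitly defer the crucial matching (``the main obstacle'') to an unspecified matroid/greedy argument. This is a genuine gap, not a routine verification, and both halves of the proposition are made to depend on it: your disjointness argument assumes that distinct sectors can only meet along coordinate-tie walls, which is exactly what the unproven identification would give --- a priori two open cones attached to \emph{different} trees could overlap inside a single generic Hepp sector, and nothing in your text rules this out; likewise your covering claim (every generic point lies in $D_T$ for $T$ its Kruskal tree) is stated as a consequence of the same missing step. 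By contrast, the paper never needs any Kruskal characterization: for the covering it proves only the weaker statement actually asserted, namely $x\in\ol D_T$ for \emph{some} $T$, by taking the flag of subgraphs through $x$ in $\beta(\ol\Delta_G)$, letting $F$ be the largest initial segment of the flag that is a forest, choosing any spanning tree $T\supset F$, and checking via \cref{eq:intersectionvertex,eq:noncorevertex} that the flag lies in the subcomplex whose interior is $D_T$; disjointness is then a separate one-line boundary argument, $\ol D_T\cap\ol D_{T'}\subset\ol D_{T\cap T'}$, which lies in the boundary of both.

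There is also a concrete false claim at the heart of your inversion, and it hides precisely the difficulty you deferred: ``the formula inflates precisely the non-tree coordinates'' is wrong, because elements of $\mg$ need only \emph{meet} $T$, not contain it, so tree coordinates get inflated as well. In the icecream-cone graph with $T=\set{e_2,e_4}$ (the paper's own example), the cone toward $v_{\gamma_{34}}$ consists of the points $[1+\mu:(1+\mu)x_2:1:x_4]$, in which the tree coordinate of $e_2$ is inflated by $1+\mu$; as a result $D_T$ contains the entire Hepp sector $x_4<x_3<x_2<x_1$, where a tree coordinate exceeds a non-tree coordinate. So a cone attached to an $(h_G-1)$-subset of $\mg$ is not cut out by the relative order of tree versus non-tree coordinates alone, and your proposed inversion (smallest non-tree coordinate picks the omitted element of $\mg$, tree coordinates ``descend'' to $C_T$) does not by itself decide which Hepp sectors a given cone contains. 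For what it is worth, your Kruskal characterization does appear to be correct (it checks out on the sunrise, icecream-cone and double-bubble graphs) and, if proven, would yield a sharper statement than \cref{prop:covering}; but as written the proposal establishes neither assertion.
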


\begin{proof}

The first assertion is true because every point $x$ in $\Delta_G$ lies in a maximal simplex $\sigma=(w_\varnothing,w_{e_1},w_{\set{e_1,e_2}},\ldots)$ of $\beta(\ol \Delta_G)$. Let $F$ be the forest in $G$, defined by 
\[
E_F=\max\set{I \subset E_G \mid w_I \in \sigma} \cap \set{I \subset E_G \mid I \text{ is a forest in $G$}}.
\]
Then we can use \cref{eq:intersectionvertex,eq:noncorevertex} to infer that $x \in \ol D_T$ for any tree $T\subset G$ such that $F\subset T$.

The second assertion follows from the fact that $\ol D_T$ and $\ol D_{T'}$ intersect in the closure of $ D_F=\bigcup_{x\in C_F} \inv{\pi_T}(x)$ for $F=T\cap T'$. But $D_F$ is contained in the boundary of both $\ol D_T$ and $\ol D_{T'}$, hence $ D_T \cap D_{T'}=\varnothing$.
\end{proof}

\begin{rem}
The examples in \cref{fig:fiberbundlesunrise,fig:dunce,fig:doublebubble} show that this partition of $\Delta_G$ is different from the ``classical" one using Hepp sectors \cite{Hepp66}. 
\end{rem}

\begin{prop}\label{prop:fiberbundle}
  The map $ \pi_T \colon D_T  \to  C_T$, defined in \cref{eq:defnpi}, is a smooth fiber bundle whose fibers are diffeomorphic to $h_G$ copies of $\mb R^{e_G-e_T-1}=\mb R^{h_G-1}$. 
\end{prop}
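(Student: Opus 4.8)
The plan is to produce an explicit global trivialization of $\pi_T$ rather than merely verifying local triviality; since $C_T$ is a cube, hence contractible, the bundle is necessarily trivial, but the cone description makes the trivialization completely explicit. The key observation is that each $v_\gamma$ is the restriction to $C_T$ of the \emph{linear} diagonal projection $P_\gamma\colon\mb C^{e_G}\to\mb C^{e_G}$ that annihilates the coordinates indexed by $E_\gamma$. Hence a point of $\mathrm{Cone}_x(v_{\gamma_1}(x),\ldots,v_{\gamma_{h_G-1}}(x))$ has homogeneous representative $\big(I+\sum_i\mu_i P_{\gamma_i}\big)x$, whose $e$-th coordinate is $x_e\big(1+\sum_{i:\,e\notin E_{\gamma_i}}\mu_i\big)$. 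This diagonal form is what renders every step computable.

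First I would fix the bookkeeping from \cref{lem:cmax}: collapsing $T$ identifies $\mg$ with the $(h_G-1)$-element subsets of the $h_G$ petals of the rose $G/T$, so I label its elements $\gamma^{(1)},\ldots,\gamma^{(h_G)}$, where $\gamma^{(k)}$ is the maximal core subgraph missing the non-tree edge $k$ while retaining all the other non-tree edges. Choosing $h_G-1$ of the $\gamma$'s then amounts to choosing the omitted index $a\in\set{1,\ldots,h_G}$; write $S_a=\mg\setminus\set{\gamma^{(a)}}$ and $D_T^a=\bigcup_{x\in C_T}\mathrm{Cone}_x\big(v_\gamma(x):\gamma\in S_a\big)$ for the corresponding piece of $D_T$.

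Both linear-algebra inputs follow by reading off the non-tree coordinates. For fixed $x\in C_T$ (where every non-tree coordinate equals $1$) the non-tree edge $j$ coordinate of $\big(I+\sum_{k\neq a}\mu_k P_{\gamma^{(k)}}\big)x$ equals $1$ if $j=a$ and $1+\mu_j$ if $j\neq a$, because $j\notin E_{\gamma^{(k)}}$ precisely when $k=j$. Substituting these into a relation $c_0x+\sum_{k\neq a}c_kP_{\gamma^{(k)}}x=0$ forces $c_0=0$ (edge $a$) and then $c_k=0$ (edge $k$), so $x,\{v_\gamma(x)\}_{\gamma\in S_a}$ are linearly independent; each cone is therefore an honest open simplicial cone diffeomorphic to $\mb R_+^{h_G-1}\cong\mb R^{h_G-1}$, and $\Psi_a\colon(x,\mu)\mapsto\big[(I+\sum_{k\neq a}\mu_kP_{\gamma^{(k)}})x\big]$ is an immersion of the $(e_G-1)$-dimensional manifold $C_T\times\mb R_+^{h_G-1}$ into the equidimensional $\Delta_G$, so $D_T^a$ is open. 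The same coordinates show that in each fiber the edge $a$ is the \emph{unique} strict minimum among the non-tree coordinates (the others being $1+\mu_j>1$), a projectively invariant condition; comparing minima yields the fiberwise disjointness of the $h_G$ cones that the ``$h_G$ copies'' count requires.

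Finally I would assemble the bundle. The minimum-coordinate criterion simultaneously detects $a$ from a point $y\in D_T$ and exhibits $D_T=\coprod_a D_T^a$ as a disjoint union of $h_G$ open sets. On $D_T^a$, normalizing the representative so that its $a$-th non-tree coordinate is $1$ recovers $\mu_j=y_{\mathrm{nt}(j)}-1$ and then $x_e=y_{\mathrm{tree}(e)}\big/\big(1+\sum_{k\neq a:\,e\notin E_{\gamma^{(k)}}}\mu_k\big)$, both smoothly since the denominator is at least $1$; thus $y\mapsto(\pi_T(y),\mu(y))$ is a smooth inverse to $\Psi_a$ and commutes with the projections to $C_T$. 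Hence $\pi_T$ restricts to the trivial bundle $C_T\times\mb R_+^{h_G-1}$ on each of its $h_G$ components, which is exactly the assertion that $\pi_T$ is a smooth fiber bundle with fiber $h_G$ copies of $\mb R^{h_G-1}$. The one genuinely delicate point is the combinatorial claim that $\gamma^{(k)}$ keeps precisely the non-tree edges other than $k$, so that the non-tree coordinates decouple as used above; this is where \cref{lem:cmax} and the distinguishing-edge corollary from the proof of \cref{lem:welldefined} do the real work, and the rest is bookkeeping with the diagonal matrices $I+\sum_k\mu_kP_{\gamma^{(k)}}$.
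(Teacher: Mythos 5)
Your proof is correct and takes essentially the same route as the paper's: the paper's (much terser) proof likewise views $D_T$ as a disjoint union of $h_G$ open cones over the points of $C_T$, with $\pi_T$ the basepoint projection, well-defined by \cref{lem:welldefined}, globally trivial because the base is contractible, and ``obviously smooth.'' Your explicit trivialization via the diagonal projections $P_\gamma$, the minimum-coordinate criterion for separating the $h_G$ components, and the explicit smooth inverse to $\Psi_a$ simply supply the details that the paper compresses into ``by construction'' and ``obviously,'' resting on the same inputs (\cref{lem:cmax} and the distinguishing-edge observation from the proof of \cref{lem:welldefined}).
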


\begin{proof}
Recall that $D_T$ is a disjoint union of $h_G$ open cones over the points in $C_T$. The map $\pi_T$ simply maps all points in a cone to its basepoint which is well-defined by \cref{lem:welldefined}, and globally trivial (by construction, and also because the base is contractible). Moreover, it is obviously smooth. 
\end{proof}

Note that $\pi_T$ is well-defined on $\ol D_T$ minus the points at infinity in $\ol \Delta_G$. We could thus extend the family $\{\pi_T\colon D_T  \to  C_T\}_{T\in \m T(G)}$ to a single continuous -- in fact, piecewise smooth -- map $\pi_G$ on $\Delta_G$ (but not on $\ol \Delta_G$). In other words, it extends to $\ti{\Delta}_G$, the closure of $\Delta_G$ in the compactification $\ti{MG}_g$ (Figure \ref{fig:sunrisecompact} indicates the map $\pi_G$ on the compactified cell $\ti \Delta_G$).

In both cases the resulting map is no longer a fiber bundle, but only a fibration\footnote{Fibers are not necessarily homeomorphic, but have the same homotopy type (\cite[\S 4]{hatcher}).}. The fibers over cubes $C_F$ for $F$ a spanning $k$-forest with $k>1$ are unions of faces of the (closure of the) cones \eqref{eq:cone}, hence also contractible, but not necessarily homeomorphic to the other fibers.

For an example see \cref{fig:fiberbundlesunrise} (or \cref{fig:sunrisecompact}): %None of the three maps $\pi_T$ is defined at the three corners of $\ol \Delta_G$. Each map can be extended to the lines $\set{x_i=x_j,x_k>x_i \colon |\set{i,j,k}|=3}$, but the fiber over the center $w_\varnothing=[1:1:1]$ is not homeomorphic to $\mb R$. 
None of the three maps $\pi_T$ is defined at the three corners of $\ol \Delta_G$. Each map can be extended to the lines $\set{x_i=x_j,x_k>x_i \colon i,j,k \din  \set{1,2,3}}$, but the fiber over the center $w_\varnothing=[1:1:1]$ is not homeomorphic to $\mb R$. 

At this point one might wonder why we do not work with $\ti \Delta_G$ instead of $\Delta_G$. This has certain advantages, for instance making the fibers of each $\pi_T$ compact. This in turn would streamline the application of fiber integration in \cref{ssec:parametricLTD}. The price to pay is that the map $\pi_T$, in particular its fibers, are more difficult to describe in this setting.

\begin{figure}[ht]
     \begin{tikzpicture}[scale=1]
\coordinate (l) at (-2.5,0);
  \coordinate  (r) at (2.5,0); 
   \coordinate  (o) at (0,3);
  \coordinate (lo) at (-1.25,1.5); 
   \coordinate (lr) at (0,0); 
  \coordinate (ro) at (1.25,1.5); 
  \coordinate (c) at (0,1.1); 
  % truncated faces
  \coordinate (co1) at (-0.33,2.6); 
    \coordinate (co2) at (0.33,2.6); 
 \coordinate (lo1) at (-2.166,0.4); 
  \coordinate (lo2) at (-1.866,0); 
   \coordinate (ro1) at (2.166,0.4); 
  \coordinate (ro2) at (1.866,0); 
  % FIBER COORDS
  \coordinate (t1) at (-1.2,1.42); 
  \coordinate (t4) at (-1,1.36);
    \coordinate (t2) at (-.8,1.31); 
    \coordinate (t5) at (-.58,1.246);
        \coordinate (t3) at (-.37,1.186);
        \coordinate (t6) at (-.17,1.141);
        % OLD FACES
  \draw[thick] (co1) -- (co2) -- (ro1) -- (ro2) -- (lo2) -- (lo1) -- (co1);
   % FIBERS
   \draw[blue] (t1) -- (o);
   \draw[blue] (t1) -- (l);
    \draw[blue] (t2) -- (o);
   \draw[blue] (t2) -- (l);
   \draw[blue] (t3) -- (o);
   \draw[blue] (t3) -- (l);
    \draw[blue] (t4) -- (o);
   \draw[blue] (t4) -- (l);
   \draw[blue] (t5) -- (o);
   \draw[blue] (t5) -- (l);
   \draw[blue] (t6) -- (o);
   \draw[blue] (t6) -- (l);
   \draw[cyan] (o) -- (c);
    \draw[cyan] (l) -- (c);
     \draw[cyan] (r) -- (c);
   % CUBES
     \draw[line width=0.5mm, red] (lo) -- (c);
   \draw[line width=0.5mm, red] (lr) -- (c);
   \draw[line width=0.5mm, red] (ro) -- (c);
  \filldraw[fill=red] (lo) circle (0.07);
  \filldraw[fill=red] (lr) circle (0.07);
 \filldraw[fill=red] (ro) circle (0.07);
  \filldraw[fill=red] (c) circle (0.07);
 % CUT OFF
\filldraw[draw=white,fill=white] (co1) -- (co2) -- (o) -- (co1);
\filldraw[draw=white,fill=white] (lo1) -- (lo2) -- (l) -- (lo1);
\filldraw[draw=white,fill=white] (ro1) -- (ro2) -- (r) -- (ro1);
% NEW FACES
\draw[thick] (co2) -- (co1);
\draw[thick] (lo2) -- (lo1);
\draw[thick] (ro2) -- (ro1);
% OLD FACES
   \draw[dashed] (l) -- (lo1);
    \draw[dashed] (l) -- (lo2);
    \draw[dashed] (o) -- (co1);
    \draw[dashed] (o) -- (co2);
    \draw[dashed] (r) -- (ro1);
    \draw[dashed] (r) -- (ro2);
  \end{tikzpicture} 
    \caption{The compactified cell $\ti \Delta_G$ for the sunrise graph. The blue lines indicate the fibers of $\pi_G : \ti \Delta_G \to S_G$ over the cube $C_{e_2}$, the cyan tripod is the fiber over the central 0-cube $C_\varnothing=[1:1:1]$.}
    \label{fig:sunrisecompact}
\end{figure}
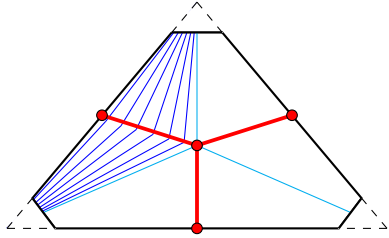

\begin{rem} \label{rem:homotopyequivalence}
The fact that there is a fibration $\pi_G$ from $\ol \Delta_G$ minus its faces at infinity to $\bigcup_{F\subset G} C_F$, where the union is over all spanning forests of $G$, implies that the two spaces are homotopy equivalent. Moreover, the construction of $\pi_G$ is functorial with respect to forest collapses and isomorphisms: 
\begin{enumerate}
\item If $G$ and $G'$ are isomorphic as graphs, then $\pi_G$ and $\pi_{G'}$ are isomorphic as bundles.
\item Each $\pi_T$ extends to the faces $\set{x_i=0 \mid i \in I, I\subset E_T}$ by $\pi_{T/I}=(\pi_{T})_{|\set{x_i=0\mid i \in I} }$. If there is $I \subset T,T'$ for two trees $T\subset G$ and $T'\subset G'$, such that $G/I=G'/I$, then by (1) the respective bundles $\pi_T$ and $\pi_{T'}$ agree when restricted to the corresponding common face of $\ol \Delta_G$ and $\ol \Delta_{G'}$.
 \end{enumerate}
Thus, we can combine the above observations to deduce the existence of a homotopy equivalence $\pi$ between the moduli space of graphs and its spine, a well-known fact \cite{cv}. See also \cite{berger-kaufmann:WBK} for another proof in a similar setting but without providing an explicit map.
\end{rem}

We finish this section with an example.

\begin{example}\label{eg:sunrise}
Consider the sunrise diagram depicted in \cref{fig:fiberbundlesunrise}. If we let $T=e_1$, then ${C}_T=\set{[x:y:y] \mid 0<x<y }$. The set $\mg$ consists of the two subgraphs on edges $e_1,e_2$ and $e_1,e_3$, and the respective $v_\gamma$ maps $C_T$ into $[0:0:y]$ and $[0:y:0]$. We find thus 
\[
D_T=\set{[x:y_1:y_2] \mid 0<x<y_i ,\ i=1,2 }
\]
and
\[
\pi_T \left([x:y_1:y_2]\right) = [x:\min(y_1,y_2):\min(y_1,y_2)]  . 
\]

If we let $\varphi$ denote the affine chart $(x_1:x_2:x_3)\mapsto (\frac{x_1}{x_3},\frac{x_2}{x_3})$ we have in these coordinates
\[
\varphi(C_T) = \set{ (z,1) \mid 0<z<1 }, \quad \varphi(D_T)=\set{ (z,w) \in (0,1)\times \mb R_+ \mid 0<z<\min(w,1)}
\]
so that $\pi_T\colon D_T\to C_T$ is locally given by the map $\pi_T^{loc} = \inv{\varphi}\circ \pi_T \circ \varphi$, 
\[
\pi_T^{loc} \colon \varphi(D_T) \longrightarrow \varphi(C_T), \  (z,w) \longmapsto  \left(\frac{z}{\min(w,1)},1\right)= \begin{cases} (z,1) & \text{ if } w>1 ,\\
(\frac{z}{w},1) & \text{ if } w<1. \end{cases}
\]
Note that $z<w$, so it really maps into the unit cube. The fiber over a point $(u,1)$ is
\[
\inv{\pi_T}(u,1)= (\set{u} \times [1,\infty]) \cup \set{ (uv,v) \mid  v \in (0,1)} \subset D_T.
\]

By the obvious symmetry of $G$ we have an analogous description for the other two spanning trees of $G$.

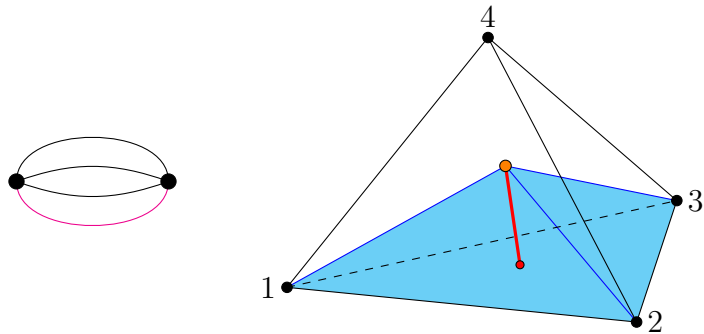
\begin{figure}[ht]
\begin{tikzpicture}[scale=1]
\coordinate (v0) at (0,0) node {};
  \coordinate  (v1) at (0,2);
   \coordinate  (v2) at (2,2);
   \draw (v1) to[out=90,in=90] (v2);
   \draw (v1) to[out=20,in=160] (v2);
   \draw (v1) to[out=-20,in=-160] (v2);
   \draw[magenta] (v1) to[out=-90,in=-90] (v2);
  \filldraw[fill=black] (v1) circle (0.1);
  \filldraw[fill=black] (v2) circle (0.1);
  \end{tikzpicture}
  \quad \quad 
\begin{tikzpicture}[scale=2.3]
\coordinate (v1) at (-1,0);
\coordinate  (v2) at (1,-.2); 
\coordinate  (v4) at (0.15,1.44); 
\coordinate  (v3) at (1.23,.5);
 \coordinate (c1) at (0.25,0.7);
 \coordinate (c2) at (0.333,.13); 
\coordinate (x) at (0.05,.75); 
  % CUBE
 \draw (c1) to (c2);
    % \fill[fill=red] (c1) -- (c2) -- (c4) -- (c3) -- (c1);
  %%%%%%%%%%%%%%%%%%%%%%%%%%%%%%%%%%%%%%%
  % \draw (p1) to (v1);
 %  \draw (p2) to (v2);
 % FIBER 
 \fill[fill=cyan!50] (v1) -- (v2) -- (c1) -- (v1);
 \fill[fill=cyan!50] (v2) -- (v3) -- (c1) -- (v2);
%  \fill[fill=cyan!50] (v1) -- (c3) -- (c4) -- (v1);
 % \fill[fill=cyan!50] (h12) -- (c1) -- (c2) -- (h12);
%    \fill[fill=cyan!50] (c1) -- (c3) -- (h34) -- (c1);
%    \fill[fill=cyan!50] (c4) -- (c2) -- (v3) -- (c4);
  \draw[blue] (c1) to (v1);
  \draw[blue] (c1) to (v2);
  \draw[blue] (c1) to (v3);
%  \draw[blue, dashed] (c2) to (v1);
%  \draw[blue, dashed] (c2) to (v2);
 % \draw[blue, dashed] (c2) to (v3);
 % SIMPLEX
 \filldraw[fill=black] (v1) circle (0.03) node[left]{$1$};
  \filldraw[fill=black] (v2) circle (0.03) node[right]{$2$};
   \filldraw[fill=black] (v4) circle (0.03) node[above]{$4$};
    \filldraw[fill=black] (v3) circle (0.03) node[right]{$3$};
  \draw (v1) to (v2);
  \draw (v1) to (v4);
  \draw (v2) to (v4);
  \draw (v4) to (v3);
   \draw[dashed] (v1) to (v3);
    \draw (v2) to (v3);
         % CUBE 
     \draw[red, very thick] (c1) to (c2);
      \filldraw[fill=red] (c2) circle (0.023);
   \filldraw[fill=orange] (c1) circle (0.033);
  \end{tikzpicture}
\caption{Cube $C_T$ and sector $D_T$ for $T=e_4$ in the 4-edge-banana graph $B_4$. Note how the fiber over $[1:1:1:0]\in \ol C_T$, consisting of three open cones, equals the union of the three sectors for $B_3=B_4/e_4$ in \cref{fig:fiberbundlesunrise}.}\label{fig:4banana}
\end{figure}
\end{example}

Note that this example generalizes in a straightforward way to the ``$n$-edge banana" graphs, consisting of two vertices connected by $n$ edges. For each $n$ the spine is a one-dimensional ``star", formed by $n$ rays connecting the center of $\Delta_G\subset \mb P(\mb C^n)$ to the barycenter of its $n$ codimension one facets. The fiber over each point in a 1-cube $C_T$ is the union of $h_G=n-1$ cones, each homeomorphic to $\mb R^{n-2}$ (the closure PL homeomorphic to a simplex of dimension ${n-2}$). See \cref{fig:4banana}.

\begin{rem}\label{rem:tadpoles}
If the graph $G$ has tadpoles the construction has to be adjusted. Consider, for instance, a graph $G$ on three edges formed by a 2-edge banana with a tadpole (\cref{fig:fiberbundletadpole}). Then it is easy to see that the definitions in \cref{eq:cone,eq:defnpi} are not sufficient; the union of the fibers $\inv{\pi_T}(x)$ does not cover $\Delta_G$. This case can be repaired by adding the tadpole $e_3$ to $\mg$, but it is unclear how to do this in general.
\end{rem}

\begin{figure}[ht]
\begin{tikzpicture}[scale=1]
  \coordinate  (v0) at (0,0) node {}; 
    \coordinate  (v1) at (0,2); 
   \coordinate  (v2) at (1.3,2);
  % \coordinate (p1) at (-0.1,0); 
  % \coordinate (p2) at (2.5,0); 
  % \draw (p1) to (v1);
 %  \draw (p2) to (v2);
   \draw[magenta] (v1) to [out=90,in=90] node[above] {\color{black}$e_1$} (v2);
   \draw[scale=3] (v2) to [out=60,in=300,loop] node[right] {$e_3$} (v2);
   \draw (v1) to [out=-90,in=-90] node[below] {$e_2$} (v2);
  \filldraw[fill=black] (v1) circle (0.1);
  \filldraw[fill=black] (v2) circle (0.1);
  \end{tikzpicture}
  %%%%%%%%%%%%%%%%%%%%%%%%%%%%
  \quad 
  %%%%%%%%%%%%%%%%%%%%%%%%%%%%%%
 \begin{tikzpicture}[scale=1]
\coordinate (l) at (-3,0);
  \coordinate  (r) at (3,0); 
   \coordinate  (o) at (0,3);
  \coordinate (lo) at (-1.5,1.5); 
   \coordinate (lr) at (0,0); 
  \coordinate (ro) at (1.5,1.5); 
  \coordinate (c) at (0,1.1); 
  \coordinate (t1) at (-1.2,1.42); 
  \coordinate (t4) at (-1,1.36);
    \coordinate (t2) at (-.8,1.31); 
    \coordinate (t5) at (-.58,1.246);
        \coordinate (t3) at (-.37,1.186);
        \coordinate (t6) at (-.17,1.141);
           \fill[cyan!50] (l) -- (o) -- (0,0) -- (l);
   \draw (l) -- (r) node[midway, below]{$[x_1:x_2:0]$};
   \draw (l) -- (o) ;
   \draw (r) -- (o) ;
 % \coordinate (u1) at (-2.7,0);
      \coordinate (u2) at (-2.5,0);
      \coordinate (u3) at (-2,0);
      \coordinate (u4) at (-1.5,0);
     % \coordinate (u5) at (-1.5,0);
      \coordinate (u6) at (-1.1,0);
      \coordinate (u7) at (-0.7,0);
      \coordinate (u8) at (-0.3,0);
 %  \draw[blue] (o) -- (u1);
   \draw[blue] (o) -- (u2);
    \draw[blue] (o) -- (u3);
   \draw[blue] (o) -- (u4);
 %  \draw[blue] (o) -- (u5);
   \draw[blue] (o) -- (u6);
   \draw[blue] (o) -- (u7);
   \draw[blue] (o) -- (u8);
     \draw[line width=0.5mm, red] (lo) -- (c);
   \draw[line width=0.5mm, red] (ro) -- (c);
  \filldraw[fill=red] (lo) circle (0.07);
 \filldraw[fill=red] (ro) circle (0.07);
  \filldraw[fill=red] (c) circle (0.07);
 \filldraw[fill=black] (l) circle (0.07);
 \filldraw[fill=black] (r) circle (0.07);
 \filldraw[fill=black] (o) circle (0.07) node[above]{$[0:0:1]$};
   \filldraw[fill=blue!66] (0,0) circle (0.05);
  \end{tikzpicture} 
   \caption{A graph $G$ with a tadpole. Here the spine has two 1-cubes $C_T$ for $T\in \m T(G)=\set{e_1,e_2}$. The blue lines indicate the fibers over the cube $C_{e_1}$, connecting its points to the faces at infinity, the point $\set{x_1=x_2=0}$ and the line $\set{x_3=0}$.}\label{fig:fiberbundletadpole}
\end{figure}
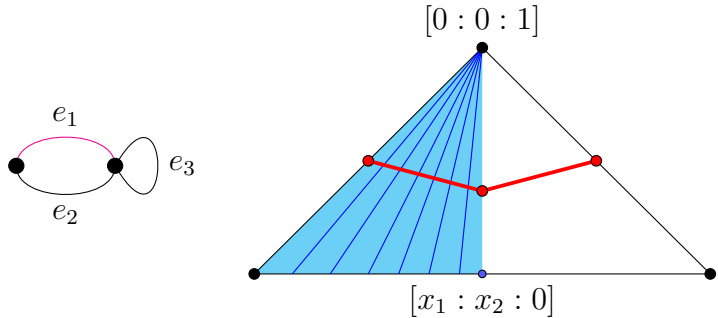

\subsection{Other fibrations}
The construction laid out here is of course by no means unique. Firstly, the spine $S_G$ can obviously be embedded in $\ol \Delta_G$ in many ways. However, our choice is in some sense canonical. Consider for instance the sunrise graph (\cref{fig:fiberbundlesunrise}). The space of edge lengths without normalization is a 3-dimensional cube $\mb R_+^3$. Working on the closure we can replace it by the unit cube $[0,1]^3$. The graph simplex $\ol \Delta_G$ is the link of the origin. If we embed it as the intersection of $[0,1]^3$ with the unit sphere (e.g.\ with respect to the 1-norm $||\cdot||_1$ to obtain an euclidean simplex), then the spine $C_G$ emerges naturally as the projection of the codimension one ``back faces" that connect to the point $(1,1,1)$, see \cref{fig:cube}. This construction generalizes in a straightforward way to every graph simplex and also to the full moduli space of graphs (or tropical curves); see \cite{berger-kaufmann:WBK}.

There are also many ways of expressing $\Delta_G$ as the total space of a fibration over its spine. Consider again the sunrise graph in \cref{fig:fiberbundlesunrise}. In this case it is easy to imagine different bundles that cover $\Delta_G$ up to a set of measure zero. It is however not clear how to make these ideas work in general, except using the geometric approach presented here, projecting the spine into the faces at infinity.

\section{A reminder on fiber integration}\label{sec:fiberintegrals}

We recall the notion of push-forward or integration along the fibers of a bundle, following \cite{ni}. See also \cite{bott-tu}.

Fiber integration is typically defined for bundles with compact fibers or as a map on compactly supported differential forms (or differential forms with compact support in the fiber direction). In contrast, the bundles that we consider have non-compact fibers. In addition, the Feynman forms $\omega_G$ that we want to integrate are not compactly supported, neither in the total space nor in the fiber. We are saved, though, by Fubini's theorem and the fact that the total integral $\int_{\sigma_G}\omega_G$ is absolutely convergent (when suitably regularized or renormalized).
Alternatively, we could work on the compactified cell $\ti \Delta_G$ in order to have the fibers of $\pi_T$ compact.

In any case we use the following proposition as a definition for fiber integration.

Let $\Omega_{\mathrm{cpt}}(X)$ denote the space of compactly supported forms on a smooth manifold $X$. We cite proposition 3.4.47 from \cite{ni}:

\begin{prop}
Let $p \colon E \to B$ be an orientable fiber bundle with fiber $F$, $k=\dim F$. Then there exists a linear operator 
\[p_*\colon \Omega^\bullet_{\mathrm{cpt}}(E)  \to \Omega^{\bullet-k}_{\mathrm{cpt}}(B).\]

It is uniquely defined by its action on forms supported on local trivializations where $p \colon \mb R^k \times \mb R^n \to \mb R^n$.
If $\omega = f dx^I \wedge dy^J$ with $f \in  C_0^\infty(\mb R^k \times \mb R^n)$, then
\[
p_*\omega = \begin{cases} 0 & \text{ if }  |I| \neq k, \\
\left( \int_{\mb R^k} fdx^I \right) dy^J & \text{ if } |I|=k.
\end{cases}
\]
\end{prop}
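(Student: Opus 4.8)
The plan is to construct $p_*$ locally from the stated formula on trivializing charts and then glue the local definitions together with a partition of unity, the real work being to show that the local formula does not depend on the chosen trivialization.

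First I would fix a coherent orientation of the fibers (this is exactly where orientability is used) and cover $B$ by open sets $U_\alpha$ over which $E$ is trivial, with orientation-preserving fiber identifications $p^{-1}(U_\alpha) \cong \mb R^k \times U_\alpha$. On each chart, writing $x$ for the fiber coordinates and $y$ for the base coordinates, I define $p_*$ on a monomial $\omega = f\, dx^I \wedge dy^J$ by integrating out $x$ when $|I| = k$ and declaring it zero when $|I| \neq k$, exactly as in the statement. Since every form is a finite sum of such monomials, this extends uniquely and linearly to $\Omega^\bullet_{\mathrm{cpt}}(p^{-1}(U_\alpha))$; that the fiber integral of a compactly supported $f$ is again smooth in $y$ follows from differentiation under the integral sign.

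Second, and this is the heart of the matter, I would verify that this local definition is invariant under change of trivialization. A transition map has the form $(x,y) \mapsto (\phi_y(x), y)$ with $\phi_y$ an orientation-preserving diffeomorphism of $\mb R^k$ depending smoothly on $y$. Pulling back the new fiber differentials $d\tilde x^I$ produces the Jacobian term $\det(D_x \phi_y)\, dx^1 \wedge \cdots \wedge dx^k$ plus terms carrying at least one $dy$-differential; the latter involve fewer than $k$ fiber differentials and therefore integrate to zero, so only the Jacobian term survives. That surviving term is precisely what the change-of-variables formula for the fiber integral $\int_{\mb R^k}$ demands, and positivity of $\det(D_x \phi_y)$ is guaranteed by orientation compatibility, so the outcome is a genuine, well-defined form on the base. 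I expect this transition-invariance to be the main obstacle, since it requires careful tracking of the horizontal/vertical decomposition of forms under bundle automorphisms together with the sign bookkeeping supplied by the fiber orientation.

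Third, I would patch the pieces: choosing a partition of unity $\{\rho_\alpha\}$ subordinate to the trivializing cover, set $p_*\omega = \sum_\alpha p_*(\rho_\alpha \omega)$. The sum is finite on the compact set $\mathrm{supp}\,\omega$, it lands in $\Omega^{\bullet - k}(B)$, and it has compact support because $p$ restricted to $\mathrm{supp}\,\omega$ is proper, so $p(\mathrm{supp}\,\omega)$ is compact. Independence of the chosen partition of unity then follows from linearity together with the transition-invariance established above. Finally, uniqueness is immediate: any linear operator agreeing with the stated formula on forms supported in trivializations must coincide with $p_*$ on every compactly supported form, since each such form is a finite sum $\sum_\alpha \rho_\alpha \omega$ of pieces supported in charts.
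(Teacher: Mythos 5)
Your proof is essentially correct, but note that the paper itself offers no proof of this proposition at all: it is quoted verbatim as Proposition 3.4.47 of Nicolaescu's \emph{Lectures on the Geometry of Manifolds} (the reference \texttt{[ni]}), and the paper uses it purely as a working definition of the push-forward $(\pi_T)_*$. So the comparison here is between your argument and the standard textbook construction that the citation points to --- and they coincide: local definition on trivializing charts, invariance under change of trivialization via the Jacobian term and the change-of-variables formula (with positivity of $\det(D_x\phi_y)$ supplied by the fiber orientation, and all terms carrying a $dy$-differential or fewer than $k$ fiber differentials killed by the local formula), then gluing by a partition of unity, with uniqueness immediate from linearity. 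Two small refinements you should make to be fully rigorous. First, the fiber $F$ is a general $k$-dimensional manifold, not $\mathbb{R}^k$, so a partition of unity on the base $B$ subordinate to the trivializing cover does not yet reduce you to the model $p\colon \mathbb{R}^k \times \mathbb{R}^n \to \mathbb{R}^n$; you need to decompose in the fiber direction as well, i.e.\ take a partition of unity on $E$ subordinate to a cover by ``box charts'' $V \times U$ with $V$ a chart of $F$ and $U$ a trivializing chart of $B$. The well-definedness argument is unchanged (compare two partitions $\{\rho_i\}$, $\{\sigma_j\}$ via the common refinement $\{\rho_i\sigma_j\}$). Second, transitions between such charts in general also move the base coordinates, $(x,y)\mapsto (\phi_y(x),\psi(y))$, not just the fiber ones; the base part is harmless because the fiber integral manifestly commutes with pullback along diffeomorphisms of the $y$-variables, but it should be said. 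With these two points added, your construction is a complete proof of the statement the paper imports by citation.
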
 
The operator $p_*$ is called \emph{push-forward along $p$} or \emph{integration along the fibers of $p$}.

\section{Feynman integrals}\label{sec:FI}
We apply the results of the previous two sections to parametric Feynman integrals. We start with a quick review on parametric Feynman integrals. A thorough discussion can be found in \cite{erik:phd}.

\subsection{Feynman integrals in Schwinger parameters} \label{ssec:parametricFI}

Let $G$ be a (massive scalar) Feynman diagram, that is, a finite connected graph with $n$ labeled edges (masses $m_e$ and powers $a_e$ of propagators) and $k$ labeled legs (momenta $p_i$ of in/out-going particles). The Feynman integral of $G$ is (up to a prefactor of constants and $\Gamma$-functions) 
\begin{equation}\label{eq:FIparametricdetail}
    I_G(p,m,a,D)=\int_{\Delta_G} \omega_G
\end{equation}
where $p=(p_1,\ldots,p_k)$ with $\sum_{i=1}^k p_i=0$, $m=(m_1,\ldots,m_n)$, $a=(a_1,\ldots,a_n)$ and 
\begin{equation*}
    \omega_G = \psi_G^{-\frac{D}{2}} \left(\frac{\varphi_G}{\psi_G}\right)^{\mathrm{sdd}(G)}\prod_{i=1}^n x_i^{a_i-1} \cdot \Omega_G, \quad \Omega_G= \sum_{i=1}^n (-1)^ix_i dx_1 \wedge \ldots \wedge \widehat{dx_i} \wedge \ldots \wedge dx_n.
\end{equation*}
Here $\psi_G$ and $\varphi_G$ are the two graph polynomials
\begin{align*}
    \psi_G & = \sum_{T\in \m T(G)} \prod_{e\notin E_T} x_e, \\
    \varphi_G & = \psi_G \sum_{i=1}^n m_e^2x_e + \sum_{\substack{F=T_1\sqcup T_2 \\ T_i \in \m T(G)}} p(T_1)^2 \prod_{e\notin E_F} x_e,
\end{align*}
$p(T_1)=-p(T_2)$ denoting the sum of all momenta flowing into the $T_1$-component of the 2-forest $F$. The complex number 
\begin{equation*}
    \mathrm{sdd}(G) = \frac{D}{2}h_1(G) - \sum_{i=1}^n a_i 
\end{equation*}
is the superficial degree of divergence of $G$. As the name suggests, it determines the region of parameters $(p,m,a,D)$ where $I_G$ converges absolutely. 

\begin{thm}[Weinberg \cite{weinberg}]
If all $m_e$ are positive and $\mathrm{sdd}(\gamma)<0$ holds for all 1-PI/core subgraphs $\gamma \subset G$, then $I_G$ converges absolutely.
\end{thm}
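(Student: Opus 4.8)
The plan is to prove absolute convergence directly, by the classical Hepp sector decomposition and power counting. Note that the fibration of \cref{ssec:fibration} is \emph{not} available here: as stressed in the introduction, its very construction presupposes that $I_G$ converges absolutely, so using it would be circular. So first I would reduce to showing $\int_{\Delta_G}|\omega_G|<\infty$. Since we integrate a projective form of degree zero, we may pass to the affine chart $x_n=1$ and work on the closed simplex, where $\Omega_G$ restricts to $\pm\,dx_1\wedge\dots\wedge dx_{e_G-1}$. On the interior the integrand is smooth, and the positivity of the masses gives $\varphi_G\ge(\min_e m_e^2)\,\psi_G\sum_e x_e$ (in the Euclidean region, where the momentum part is nonnegative), so $\varphi_G$ stays strictly positive and the factor $(\varphi_G/\psi_G)^{\mathrm{sdd}(G)}$ introduces no infrared/mass singularity. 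Hence the only possible divergences sit on $\partial\Delta_G=\set{\prod_e x_e=0}$, i.e.\ in the regions where the parameters of some subgraph tend to zero.

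Second, I would decompose the closed simplex into the $e_G!$ Hepp sectors indexed by the orderings $x_{\pi(1)}\le\dots\le x_{\pi(e_G)}$ and, in the sector $\pi$, introduce variables $t_1,\dots,t_{e_G-1}\in(0,1]$ by $x_{\pi(k)}=\prod_{j\ge k}t_j$, with $x_{\pi(e_G)}=1$. Each ordering produces a flag $G_1\subset\dots\subset G_{e_G-1}\subset G$, where $G_k$ is the subgraph spanned by the $k$ smallest edges, and the boundary stratum $\set{x_{\pi(1)}=\dots=x_{\pi(k)}=0}$ is $t_k\to 0$. The key step is the factorization of the Symanzik polynomials in these coordinates. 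Since the minimal power of $t_k$ in $\prod_{e\notin T}x_e$ over all $T\in\m T(G)$ equals $|E_{G_k}|-\mathrm{rank}(G_k)=h_{G_k}$ (a spanning tree meets $G_k$ in at most a spanning forest of $G_k$), one obtains $\psi_G=\big(\prod_k t_k^{h_{G_k}}\big)\,u_\psi(t)$ and likewise $\varphi_G=\big(\prod_k t_k^{h_{G_k}}\big)\,u_\varphi(t)$, the positivity of the masses forcing the term $\psi_G\sum_e m_e^2x_e$ to supply the leading power $h_{G_k}$ (a spanning $2$-forest also restricts to a forest of $G_k$, so the momentum part cannot lower it). Both remainders $u_\psi,u_\varphi$ extend continuously and strictly positively to the closed cube $[0,1]^{e_G-1}$, hence are bounded between positive constants by compactness.

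Collecting the powers of $t_k$ coming from $\prod_e x_e^{a_e-1}$, from the Jacobian $\prod_k t_k^{k-1}$ of the sector map, and from $\psi_G^{-D/2-\mathrm{sdd}(G)}\varphi_G^{\mathrm{sdd}(G)}$, the $\mathrm{sdd}(G)$-dependent contributions of $\psi_G$ and $\varphi_G$ cancel, and the total exponent of $t_k$ is $\alpha_k-1$ with $\alpha_k=\sum_{e\in E_{G_k}}a_e-\tfrac{D}{2}h_{G_k}=-\mathrm{sdd}(G_k)$. Thus, up to the bounded factors $u_\psi,u_\varphi$, each sector integral is $\int_{[0,1]^{e_G-1}}\prod_k t_k^{\alpha_k-1}\,dt$, which converges absolutely if and only if $\alpha_k>0$, i.e.\ $\mathrm{sdd}(G_k)<0$, for every $k$. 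As $\pi$ and $k$ range, the $G_k$ exhaust all proper edge subgraphs of $G$, so convergence is equivalent to $\mathrm{sdd}(\gamma)<0$ for all $\gamma\subsetneq G$.

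Finally I would reduce this to the stated hypothesis, which only constrains core subgraphs. For any subgraph $\gamma$ let $\gamma_0$ be its core (delete every edge lying in no cycle). Then $h_{\gamma_0}=h_\gamma$ while $\sum_{e\in E_{\gamma_0}}a_e\le\sum_{e\in E_\gamma}a_e$, so $\mathrm{sdd}(\gamma)\le\mathrm{sdd}(\gamma_0)$; in particular $\mathrm{sdd}(\gamma)=-\sum_{e\in E_\gamma}a_e<0$ whenever $\gamma$ is a forest. Hence $\mathrm{sdd}<0$ on all core subgraphs implies $\mathrm{sdd}<0$ on all subgraphs, which by the previous paragraph yields absolute convergence in every sector, and therefore of $\int_{\Delta_G}|\omega_G|$. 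The main obstacle is the factorization/power-counting lemma together with the uniform positivity of the remainders $u_\psi,u_\varphi$ on the closed cube; it is precisely here that the hypothesis $m_e>0$ is essential, both to keep $\varphi_G$ positive and to pin its leading Hepp power to $h_{G_k}$.
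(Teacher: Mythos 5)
Your proposal is correct in substance, but note that the paper does not prove this statement at all: it is quoted as Weinberg's theorem \cite{weinberg} and used as a black box, since its conclusion (absolute convergence) is exactly what legitimizes the decomposition and fiber-integration steps later on --- you are right that invoking the fibration of \cref{ssec:fibration} here would be circular. What you have written is essentially the classical parametric proof of Hepp and Speer, i.e.\ the same sector decomposition the paper itself points to in the introduction \cite{Hepp66,Speer75}; so your route is "different from the paper" only in the sense that the paper defers to the literature while you supply the argument. Your power counting is right: in the sector attached to an ordering, the total exponent of $t_k$ is $-\mathrm{sdd}(G_k)-1$, the $\mathrm{sdd}(G)$-dependent powers of $\psi_G$ and $\varphi_G$ cancel, and as the ordering and $k$ vary the flag subgraphs $G_k$ exhaust all nonempty proper edge subgraphs. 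Two points should be made explicit to close the argument. First, the strict positivity of $u_\psi$ and $u_\varphi$ on the \emph{closed} cube is not "by compactness" alone: you need one monomial of $\psi_G$ whose $t_k$-degree equals $h_{G_k}$ for \emph{every} $k$ simultaneously, i.e.\ a single spanning tree of $G$ restricting to a spanning forest of each $G_k$; this exists by greedily extending forests along the flag (Kruskal), and only then is $u_\psi(0)>0$, which your uniform lower bound requires (the per-$k$ minimality you prove does not by itself give this). Second, your reduction from core subgraphs to all subgraphs, and the claim that forests satisfy $\mathrm{sdd}(\gamma)=-\sum_{e\in E_\gamma}a_e<0$, silently assumes $a_e>0$ (or $\mathrm{Re}\,a_e>0$) together with Euclidean kinematics $p(T_1)^2\geq 0$; these are the standard tacit hypotheses of the cited theorem, and they are genuinely needed --- a bridge with $a_e\le 0$ would produce a divergent sector invisible to the core-subgraph condition. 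With those two clarifications your argument is a complete and correct proof of the statement the paper only cites.
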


We henceforth assume that the parameters $(a,D)$ lie in the region determined by this criterion. Otherwise we would have to renormalize the integrand \cite{bk-as} in order to make sense of the integral. Since we are mainly interested in structural properties we omit a discussion of this procedure\footnote{See \cite{mb2} for a discussion of Feynman amplitudes and renormalization on the moduli space of graphs and its compactification $\ti MG_g$.}. The main message is that the integral $I_G$ converges absolutely, so that we can apply the aforementioned formalism of fiber integration to this case.

\subsection{Loop-tree duality} \label{ssec:parametricLTD}

To arrive at a parametric variant of loop-tree duality we decompose the integration domain of $I_G$ as $\Delta_G = 
\bigcup_{T \in \m T (G)} D_T $, using \cref{prop:covering}. 

We can thus rewrite $I_G$ as 
\[
I_G= \sum_{T \in \m T(G)} I_{G,T} \  \text{ where } \ I_{G,T} = \int_{D_T} \omega_G.
\]
The individual integrals $I_{G,T}$ can be further simplified, 
\[
I_{G,T}= \int_{D_T} \omega_G \overset{\text{a.e.}}{=} \int_{C_T} ({\pi_T})_* \omega_{G}.
\]
The second identity is valid only outside a set of measure zero $X_T \subset D_T$ where the fiber bundle $\pi_T$ is not smooth (see \cref{prop:fiberbundle}). To be precise we should define $\hat D_T= D_T - X_T $ and
\be \label{eq:hatpi}
 \hat \pi_T = {\pi_T}\big|_{\hat D_T} \colon \hat D_T \longrightarrow  C_T
\ee
so that we can write 
\[
\int_{D_T} \omega_G = \int_{\hat D_T} \omega_G  = \int_{C_T} (\hat \pi_T)_* \omega_G.
\]

It is important to note that there is no map $(\hat \pi_T)_*\colon \Omega^\bullet(\hat D_T) \to \Omega^{\bullet - h_G + 1 }(C_T)$; the pushforward is only defined on those forms for which the fiber integral is finite\footnote{It exists though as a map $(\ti \pi_T)_*\colon \Omega^\bullet(\ti D_T - X_T) \to \Omega^{\bullet - h_G + 1 }(C_T)$ for $\ti D_T$ the closure of $D_T$ in $\ti \Delta_G$ and $X_T$ as above.}, in particular those forms $\omega \in \Omega^\bullet(\hat D_T)$ for which $\int_{\hat D_T}\omega$ converges absolutely. 

Putting everything together, we have proved

\begin{thm}[Parametric loop-tree duality]
Every (convergent/renormalized) parametric Feynman integral
\[
I_G=\int_{\sigma_G} \omega_G
\]
as in \cref{eq:FIparametricdetail} can be written as a sum of integrals over unit cubes $C_T\cong (0,1)^{e_T}$, indexed by the set $\m T(G)$ of spanning trees of $G$,
\be \label{eq:finalparametricLTD}
I_G = \sum_{T \in \m T(G)} \int_{C_T} \omega_{G,T},
\ee
where $\omega_{G,T} = (\hat \pi_T)_*\omega_G \in \Omega^{e_T}(C_T)$ and $\hat \pi_T$ the smooth fiber bundle defined above in \cref{eq:hatpi}.
\end{thm}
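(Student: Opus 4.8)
The plan is to combine the three structural results established above — the domain decomposition of \cref{prop:covering}, the fiber bundle structure of \cref{prop:fiberbundle}, and the absolute convergence of $I_G$ (guaranteed by Weinberg's criterion, or arranged by renormalization) — with the fiber integration formalism of \cref{sec:fiberintegrals}.

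First I would decompose the domain. By \cref{prop:covering} the sets $D_T$ cover $\Delta_G$ and their pairwise intersections have codimension greater than one, hence measure zero. Since $\omega_G$ is absolutely integrable over $\Delta_G$, additivity of the integral over these essentially disjoint pieces immediately yields
\[
I_G = \int_{\Delta_G} \omega_G = \sum_{T \in \m T(G)} \int_{D_T} \omega_G.
\]
The remaining task is to identify each summand with an integral over the base cube $C_T$. To this end I would pass to $\hat D_T = D_T \setminus X_T$, removing the measure-zero locus $X_T$ on which $\pi_T$ fails to be a smooth bundle, so that $\hat\pi_T$ of \cref{eq:hatpi} is a genuine smooth fiber bundle with fibers $\inv{\pi_T}(x)\cong \mb R^{h_G-1}$. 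Working in local trivializations and applying the description of the push-forward recalled in \cref{sec:fiberintegrals}, I would invoke Fubini's theorem to integrate first along the fibers and then over the base, obtaining
\[
\int_{D_T}\omega_G = \int_{\hat D_T}\omega_G = \int_{C_T}(\hat\pi_T)_*\omega_G = \int_{C_T}\omega_{G,T}.
\]
Summing over all spanning trees then gives \cref{eq:finalparametricLTD}.

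The main obstacle, and the step I expect to require the most care, is that the cited push-forward operator is defined a priori only on compactly supported forms over bundles with \emph{compact} fibers, whereas here neither $\omega_G$ nor the fibers $\mb R^{h_G-1}$ are compact. The key point is that absolute convergence of $\int_{D_T}\omega_G$ is exactly the hypothesis needed to replace compact support: it licenses the interchange in the order of integration in Fubini's theorem, guarantees that the fiber integral $(\hat\pi_T)_*\omega_G$ is finite for almost every $x\in C_T$, and thereby certifies that $\omega_{G,T}\in\Omega^{e_T}(C_T)$ is well defined. An alternative I would keep in mind is to run the argument on the compactified cell $\ti\Delta_G$, where the extended fibers become compact and the classical fiber integration statement applies verbatim; one then only has to verify that the two prescriptions agree away from the faces at infinity.
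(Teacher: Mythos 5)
Your proposal is correct and follows essentially the same route as the paper: decompose $\Delta_G$ into the sectors $D_T$ via \cref{prop:covering}, pass to $\hat D_T = D_T - X_T$ to get the smooth bundle $\hat\pi_T$, and use absolute convergence plus Fubini to justify fiber integration in place of compact support (including the same fallback of working on the compactified cell $\ti\Delta_G$), exactly as in \cref{ssec:parametricLTD}. The only slight imprecision is your description of the fibers: by \cref{prop:fiberbundle} each fiber is a disjoint union of $h_G$ copies of $\mb R^{h_G-1}$, not a single copy, though this does not affect the argument.
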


\begin{rem}[The one loop case]
If $h_G=1$, then the cubes $C_T$ already partition the simplex $\Delta_G$,
\[
\Delta_G= \bigcup_{e \in E_G} \set{ x_e > x_{e'} \text{ for all }  e'\neq e }.
\]
Thus in this case \cref{eq:finalparametricLTD} takes the form
\[
I_G= \sum_{T \in \m T(G)} \int_{C_T} (\omega_G)_{|C_T}.
\]
\end{rem}

\subsection{Examples}\label{ssec:examples}

Recall the discussion in \cref{eg:sunrise}. For $T=e_1$ we have in the affine chart with $x_3=1$
\[
D_T=\set{ (z,w) \in (0,1)\times \mb R_+ \mid 0<z<\min(w,1)}, \quad C_T = \set{ (z,1) \mid 0<z<1 }
\]
and
\[
\pi_T \colon D_T \longrightarrow C_T, \  (z,w) \longmapsto  \left(\frac{z}{\min(w,1)},1\right)= \begin{cases} (z,1) & \text{ if } w>1 ,\\
(\frac{z}{w},1) & \text{ if } w<1, \end{cases}
\]
so that
\[
\inv{\pi_T}(u,1)= (\set{u} \times (1,\infty)) \cup \set{ (uv,v) \mid  v \in (0,1)} \subset D_T.
\]
This is smooth on the complement of $C_T$ in $D_T$, so $\hat D_T=D_T - C_T$. In these coordinates the integral over the sector $D_T$ is given by 
\[
I_{G,T}=\int_{D_T} \omega_G=  \int_{ (0,1) \times  (1,\infty) \cup \set{ (uv,v) \mid  u,v \in (0,1)} }  f_G(z,w)\, dzdw
\]
with $f_G$ determined by the Feynman rules in \cref{ssec:parametricFI}.
Then the fiber integral is
\[
    \omega_{G,T}=  \left( \int_{(1,\infty)} f_G(z,w)\, dw +    \int_{(0,1)} f_G(zw,w)w\, dw  \right) dz
\]
and \cref{eq:finalparametricLTD} takes thus the form
\[
I_G = 3 \cdot \int_{(0,1)} \left(  \int_{(1,\infty)} f_G(z,w)\, dw +    \int_{(0,1)} f_G(zw,w)w \, dw \right) dz.
\]
See also \cite[Appendix E]{dirk:bananas} which discusses the renormalized parametric sunrise integral along similar lines.

To be more concrete let us consider a simple toy model. If 
\[
\omega_G=\frac{x_3\Omega_G}{(x_1 + x_3)^2(x_2 + x_3)^2},
\]
then for $T_i=\set{e_i}$ we get
\begin{align*}
    \omega_{G,T_1}=\omega_{G,T_2} &= \left( \int_{(1,\infty)} \frac{dw}{(w + 1)^2(z + 1)^2}  +    \int_{(0,1)} \frac{w\, dw}{(zw + 1)^2(w + 1)^2}  \right) dz ,\\
    \omega_{G,T_3} & = \left( \int_{(1,\infty)} \frac{z \, dw}{(w + z)^2(z + 1)^2}  +    \int_{(0,1)} \frac{z \, dw}{(zw + 1)^2(z + 1)^2}  \right) dz ,
\end{align*}
and thus
\[
I_G= \int_{\Delta_G} \omega_G = \int_{(0,\infty)^2} \frac{dx_1 dx_2}{(x_1 + 1)^2(x_2 + 1)^2}  = 2 \int_0^1\omega_{G,T_1}  + \int_0^1 \omega_{G,T_3}
\]
which may be confirmed by direct integration.

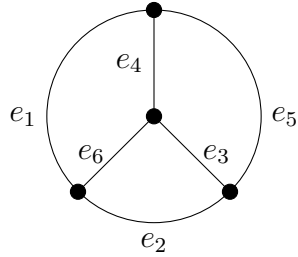
\begin{figure}[ht]
\begin{tikzpicture}[scale=1]
\draw (0,0) circle (1.41cm);
\coordinate (c) at (0,0);
\coordinate (l) at (-1,-1);
\coordinate (r) at (1,-1);
\coordinate (o) at (0,1.41);
 \draw (l) to node[left] {$e_6$} (c);
 \draw (r) to node[right] {$e_3$} (c);
 \draw (o) to node[left] {$e_4$} (c);
  \filldraw[fill=black] (l) circle (0.1);
  \filldraw[fill=black] (r) circle (0.1);
  \filldraw[fill=black] (o) circle (0.1);
  \filldraw[fill=black] (c) circle (0.1);
  \node (ll) at (-1.4,0) [left] {$e_1$};
  \node (rr) at (1.4,0) [right] {$e_5$};
  \node (oo) at (0,-1.41) [below] {$e_2$};
 \end{tikzpicture}
  \caption{The wheel with three spokes.}\label{fig:wheel}
\end{figure}

As another example we consider the wheel with three spokes (\cref{fig:wheel}). Its \textit{period} is the integral 
\[
\m P(G) = \int_{\Delta_G} \frac{\Omega_G}{\psi_G^2}
\]
with
\begin{align*}
\psi_G = & x_1x_2x_5 + x_1x_2x_3 + x_1x_2x_4 + x_1x_3x_5 + x_1x_5x_6 + x_1x_3x_4 + x_1x_3x_6 + x_1x_4x_6
\\
& + x_2x_4x_5 + x_2x_5x_6 + x_2x_3x_4 + x_2x_3x_6 + x_2 x_4x_6 + x_3x_4x_5 + x_3x_5x_6 + x_4x_5x_6.
\end{align*}
The spanning trees of $G$ come in two different shapes, either paths on three edges, for instance $\set{e_1,e_2,e_3}$, or ``claws"\footnote{In graph theory, a \textit{claw} is a star graph on three edges; a star graph on $k$ edges is a tree with one internal vertex and $k$ leaves.}, for instance $\set{e_1,e_2,e_6}$. In both cases the set $\mg$ consists of the core subgraphs obtained from adding to $T$ two out of the three edges from its complement.

For example, if $T=\set{e_1,e_2,e_3}$, then 
\[
\mg= \Big\{ \gamma \subset G \colon E_\gamma=\set{e_1,\ldots,\widehat{e_i},\ldots,e_6 } \mid i=4,5,6 \Big\}
\]
and therefore 
\begin{align*}
    \inv{\pi_T}(x)  =& \set{ [x_1:x_2:x_3:\lambda x_4:\mu x_5:x_6] }  \cup \set{ [x_1:x_2:x_3:\lambda x_4: x_5:\mu x_6] }  \\
 &  \cup \set{ [x_1:x_2:x_3: x_4: \lambda x_5:\mu x_6] }, \quad \lambda,\mu > 1 .
\end{align*}
In the affine chart $x_6=1$ where
$
C_T= \set{ (x_1,x_2,x_3,1,1) \mid x_i \in (0,1) }
$
we find 
\begin{align*}
 D_T= & \set{ (x_1,x_2,x_3,u,v) \mid u,v > 1 } \cup \set{ (vx_1,vx_2,vx_3,vu,v) \mid u>1,v\in (0,1) }   \\
 &  \cup \set{(vx_1,vx_2,vx_3,v,vu)\mid u>1,v\in (0,1)}.
\end{align*}
With $f_i=\psi_G^{-2}\big|_{x_i=1}$ we have then
\begin{align*}
\omega_{G,T}  = & \bigg( \int_1^\infty dv \int_1^\infty du \, f_6(x_1,x_2,x_3,u,v)    + \int_0^1 dv \int_1^\infty  du \, v^4 f_6(vx_1,vx_2,vx_3,vu,v) \\
&+  \int_0^1 dv \int_1^\infty du\,  v^4 f_6(vx_1,vx_2,vx_3,v,vu) \bigg) dx_1dx_2dx_3.
\end{align*}

We can explicitly compute the forms $\omega_{G,T}$ by means of the \textit{contraction-deletion identity} 
\[
\psi_G=\psi_{G\backslash e_i}x_i + \psi_{G/e_i}.
\]
Write $\psi^I_J$ for $\psi_{G'}$ where $G'$ is the graph $G$ with edges $e_i$, $i\in I$, removed and $e_j$, $j\in J$, contracted. Then $\psi^i x_i + \psi_i$ implies  $f_j=(\psi^j+\psi_j)^{-2}$. Furthermore, if $\lambda_i$ denotes the map 
\[
x_k \longmapsto \begin{cases} x_i x_k & \text{ if }  k\neq i,\\ x_i & \text{ else,}
\end{cases}
\]
then 
\[
f_j\circ \lambda_i= \frac{1}{ x_i^{4}\Big(  (\psi^i_j + \psi_{ij})x_i + (\psi^{ij} + \psi^j_i)  \Big)^2 }.
\]
 It follows that
\begin{align*}
\omega_{G,T}  = & \bigg( \int_1^\infty dx_5 \int_1^\infty dx_4  \frac{1}{ \big( (\psi^5_6 + \psi^{56})x_5 + \psi_{56} + \psi^6_5 \big)^2 } \\
 &+\int_0^1 dx_5 \int_1^\infty  dx_4  \frac{1}{ \big( (\psi^5_6 + \psi_{56})x_5 + \psi^{56} + \psi^6_5  \big)^2 } \\
 & +\int_0^1 dx_4 \int_1^\infty dx_5   \frac{1}{\big(  (\psi^4_6 + \psi_{46})x_4 + \psi^{46} + \psi^6_4  \big)^2} \bigg) dx_1dx_2dx_3 \\
 = & \bigg(  \int_1^\infty dx_4 \frac{1}{ \psi^5_6 + \psi_{56}  + \psi^{56} + \psi^6_5 }  \cdot  \left( \frac{1}{\psi^5_6 + \psi^{56}} + \frac{1}{\psi^6_5 + \psi^{56}} \right)   \\
&+   \int_1^\infty dx_5   \frac{1}{( \psi^{46} + \psi^6_4 ) (\psi^4_6 + \psi_{46} + \psi^{46} + \psi^6_4) } \bigg) dx_1dx_2dx_3
\end{align*}
which evaluates to a rational linear combination of logarithms with rational arguments in $(x_1,x_2,x_3)\in (0,1)^3$.

\section{Outlook}\label{sec:outlook}

\subsection{Iterated fiber integrals}
The fibers $\inv{\pi_T}(x)$ are unions of open cones. The closure of this union in $\Delta_G$ is homeomorphic to a simplex of the same dimension. 
 Each such simplex can again be expressed as the union of total spaces of fiber bundles over cubes.
We may therefore iterate the process of fiber integration. 

Let us look at the $4$-edge banana graph $B_4$ in \cref{fig:4banana}. The closure in $\Delta_{B_4}$ of the fiber over $x \in C_{e_4}$ is (PL) homeomorphic to a 2-simplex which we identify with the space of metrics on the graph $G-T=G-e_4$, a 3-edge banana $B_3$. Moreover, the three lines from $x$ to the corners at infinity on which $\pi_{e_4}$ is not smooth correspond precisely to the three lines where $\pi_{B_3}$ fails to be a fiber bundle (see \cref{fig:sunrisecompact}). This means we can express the integral along the fibers of $\pi_{B_4}$ itself as a (sum of) fiber integrals of the form $I_{B_3}$, up to a set of measure zero.

For a general banana graph we can iterate this process until we are left with one-dimensional fibers, that is, until $G-T_1-T_2 - \ldots - T_k$ is the 3-edge banana graph. Here, no further simplification is possible.

Note that this is very reminiscent of the recursive structure used in \cite{dirk:bananas} to discuss the analytic structure of Feynman integrals of banana graphs.

It would be interesting to study if and how this generalizes to graphs with banana subgraphs (that is, graphs with multi-edges).

\subsection{Periods}
The loop-tree duality formula \cref{eq:finalparametricLTD} arises solely from a decomposition of the integration domain subject to the combinatorics of the graph $G$.

If the integrand is sufficiently nice, for instance in the case of periods, or \textit{generalized periods} in the sense of \cite{fb:periods}, it may be possible to further simplify the forms $\omega_{G,T}$ by exploiting the structure of the graph polynomial $\psi_G$ (cf.\ \cite[\S 2,\S3]{fb:periods}). The example of the wheel with three spokes shows that we can always do one fiber integration using the contraction-deletion identity for $\psi_G$. 

A natural question to ask then is for which forms $\omega_G$ the pushforwards $\omega_{G,T}$ can be computed explicitly and how does it depend on the combinatorics of $G$ and $T$.

\subsection{Another parametric representation}

Consider the equivalent representation 
\[
I_G(p,m,a,D) = \int_{(0,\infty)^{n}}  \frac{e^{ -\frac{\varphi_G(p,m)}{\psi_G}}}{\psi_G^{\frac{D}{2}}} \prod_{i=1}^n x_i^{a_i-1} dx_i
\]
where we integrate over the full cell $\sigma(G)$ in ${\mathcal{MG}}_g$, that is, without normalizing the volume of $G$. The decomposition of $\Delta_G$ that we constructed above can be ``lifted" to a decomposition of $\sigma(G)$: The spine becomes a fan based at the origin $0\in \mb R^n$, the maps $\pi_T$ assemble to one-dimensional families of fiber bundles. 

\begin{figure}[ht]
\begin{tikzpicture}[scale=1.5]
 \coordinate (o) at (0,0,0);
 \coordinate (e1) at (2,0,0.4);
 \coordinate (e2) at (0,2,0);
 \coordinate (e3) at (-0.4,0,2);
 \coordinate (e12) at (2,2,0.4);
 \coordinate (e13) at (1.6,0,2.4);
 \coordinate (e23) at (-0.4,2,2);
 \coordinate (e123) at (1.6,2,2.4);
\filldraw[gray, opacity=0.3] (0,1,0) -- (1,0,0.2) -- (-0.2,0,1) -- (0,1,0);
\filldraw[red, opacity=0.3] (o) -- (e23) -- (e123) -- (o);
\filldraw[red, opacity=0.4] (o) -- (e13) -- (e123) -- (o);
\filldraw[red, opacity=0.3] (o) -- (e12) -- (e123) -- (o);
% outer cube
\draw[dashed] (o) -- (e1);
\draw[dashed] (o) -- (e2);
\draw[dashed] (o) -- (e3);
\draw (e13) -- (e3);
\draw (e23) -- (e3);
\draw (e2) -- (e23);
\draw (e2) -- (e12);
\draw (e1) -- (e12);
\draw (e1) -- (e13);
\draw[black!50] (e13) -- (e123);
\draw[black!50] (e123) -- (e23);
\draw[black!50] (e123) -- (e12);
%% sigma
\draw[gray] (0,1,0) -- (1,0,0.2);
\draw[gray] (0,1,0) -- (-0.2,0,1);
\draw[gray] (-0.2,0,1) -- (1,0,0.2);
%% cube/fan
\coordinate (c1) at (-0.1,0.5,0.5);
\coordinate (c) at (0.4,0.5,0.6);
\coordinate (c2) at (0.4,0,0.6);
\coordinate (c3) at (0.5,0.5,0.1);
% spine
\draw[red, dashed] (c1) -- (c);
\draw[red, dashed] (c2) -- (c);
\draw[red, dashed] (c3) -- (c);
 \end{tikzpicture}
  \caption{The cell $\sigma(G)$ of the sunrise graph. The gray simplex is $\Delta_G$, its spine indicated by the red dashed lines . The three red cones form the associated fan in $\mb R_+^3$. To find the fiber over a point $x$ in one of these cones, translate $\Delta_G$, so that $x \in \tau.\Delta_G=\set{ \sum_{i=1}^3y_i=\tau}$, then take the corresponding fiber in $\tau.\Delta_G$ (as in \cref{fig:fiberbundlesunrise}).}\label{fig:cube}
\end{figure}
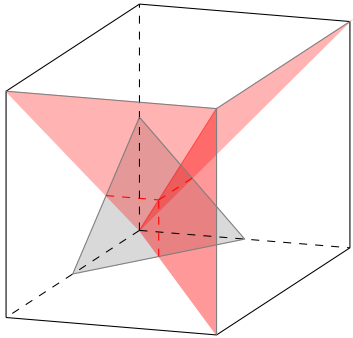

In this representation the problem of ultraviolet and infrared renormalization splits in an interesting way. While the divergences at $x_e \to 0$ are still located in the fibers of $\pi_T$, the divergences at $x_e \to \infty$ (see e.g.\ \cite{AH_Miz:IRpoly}) lie now on the respective components of the spine. This means that ultraviolet renormalization is needed to define the (lifted) push-forwards $(\pi_T)_*$ while the infrared divergences appear then at the level of the forms $\omega_{G,T}$.

\subsection{Differential forms on graph complexes}

The parametric loop-tree duality formalism applies of course also to other differential forms on $\Delta_G$, not just those given by Feynman rules. Moreover, one may also try to apply it to forms on the (full) moduli space of graphs or tropical curves, for instance the \textit{canonical forms} constructed in \cite{brown21,brown22}. 

Recall the definition of $MG_g$ in \cref{ssec:modspace}. A \textit{smooth differential form of degree $k$ on $MG_g$} is a family
\[
 \omega = \big\{ \omega_G \in \Omega^k( \Delta_G) \mid G \in \mathrm{ob(\Gamma_g)} \big\}
\]
such that
\begin{enumerate}
    \item if $\varphi\colon G \overset{\sim}{\to} G'$ is an isomorphism and $\Delta_\varphi$ the induced map $ \Delta_G \overset{\sim}{\to} \Delta_{G'}$, then $\Delta_\varphi^*\omega_{G'}=\omega_G$,
    \item if $e\in E_G$ is not a tadpole and $\iota_e$ is the inclusion $\Delta_{G/e} \hookrightarrow \ol \Delta_G$, then $\omega_G$ extends to the face $\iota_e(\Delta_{G/e})$ and satisfies $\iota_e^*  \omega_G = \omega_{G/e} $.
\end{enumerate}

Examples can be found in \cite{brown21,brown22}. Note that Feynman rules do not define differential forms on $MG_g$; they define a form $\omega_G$ on $\Delta_G$ for each $G$, but these do not assemble to a differential form in the above sense. Instead one may interpret the family $\set{\omega_G}_G$ as a distribution on the cell complex $MG_g$ \cite{mb2}.

Given a differential form $\omega$ on $MG_g$ such that $\int_{\Delta_G}\omega_G$ converges absolutely for every graph $G$, we can associate to it a differential form $\eta$ on the maximal cubes of the spine $S_g$. On a cube $C_T^G\subset \Delta_G$ (we add a superscript label to cubes and maps in order to keep track of the cell we are working in) where $T\in \m T(G)$ use the push-forward along the corresponding restriction $\pi_T^G$ of $\pi \colon MG_g \to S_g$ to set 
\[
\eta_T^G = (\pi_T^G)_*\omega_G.
\]

Now the question is if and how $\eta$ can be extended to the full spine $S_g$. 
A cube $\ol {C}_T^G$ has two kinds of faces; they correspond either to collapsing an edge $e$ in $T$, or removing it from $T$. 

In the first case, if we send $x_e$ to zero, we get a cube $C_{T/e}^{G/e}$ in $\Delta_{G/e}$. By functoriality of both $\omega$ and $\pi$ (item (2) above and \cref{rem:homotopyequivalence}) we have
\[
\iota_e^*  \eta_T^G = \eta_{T/e}^{G/e}.
\]

In the second case the push-forward along $\pi$ is not defined, since over $C_{T-e}^G$ the map $\pi$ is not a fiber bundle. If we simply set 
\[
\eta_{T-e}^G = \tau_{e}^* \eta_T^G
\]
where $\tau_{e}$ is the inclusion $C_{T-e}^G \hookrightarrow C_T^G$ defined by sending $x_e$ to 1, then it is not clear whether this is well-defined. In fact, examples show that it is not. However, if $\eta_T^G$ and $\eta_{T'}^G$ are closed, and for $T-e=T'-e'$ the differences $\tau_{e}^* \eta_T^G - \tau_{e'}^* \eta_{T'}^G=d\alpha$ are exact, then (under some mild technical assumptions) the triple $(\alpha, \eta_T^G , \eta_{T'}^G)$ defines a cohomology class on $C_T \cup C_{T'} \cup C_{T-e}$.\footnote{Thanks to Erik Panzer for pointing this out to me.} 

These ideas and their application to the study of graph complexes will be pursued in future work.

%\bibliographystyle{alpha}
%\bibliographystyle{plainnat}
%\bibliography{ref}

\printbibliography

\end{document}